\documentclass{elsarticle}

\usepackage[utf8]{inputenc}
\usepackage{geometry}
\usepackage{amsmath, amssymb, amsthm, amsfonts}
\usepackage{graphicx}
\usepackage{hyperref}
\usepackage{tikz}
\usepackage{multirow}
\usetikzlibrary{fit, positioning}
\usepackage{booktabs}
\usepackage{algorithm}
\usepackage{algpseudocode}
\usepackage{tabularx}

\newtheorem{theorem}{Theorem}[section]
\newtheorem{lemma}[theorem]{Lemma}
\newtheorem{assumption}{Assumption}[section]

\begin{document}

\begin{frontmatter}

\title{Differentiable Normative Guidance for Nash Bargaining Solution Recovery}


\author[1]{Moirangthem Tiken Singh\corref{cor1}}
\ead{tiken.m@dibru.ac.in} 

\author[1]{Surajit Borkotokey}
\ead{sborkotokey@dibru.ac.in}

\author[2]{Rajnish Kumar}
\ead{rajnish.kumar@qub.ac.uk}

\cortext[cor1]{Corresponding author}

\affiliation[1]{organization={Dibrugarh University}, 
             city={Dibrugarh}, 
             state={Assam}, 
             country={India}}

\affiliation[2]{organization={Queen's University}, 
             city={Belfast}, 
             country={United Kingdom}}

\begin{abstract}

Autonomous artificial intelligence agents in negotiation systems must generate equitable utility allocations satisfying individual rationality (IR), ensuring each agent receives at least its outside option, and the Nash Bargaining Solution (NBS), which maximizes joint surplus. Existing generative models often learn suboptimal human behaviors, producing solutions far from Pareto efficiency, while classical methods require full Pareto frontier knowledge, which is unavailable in real datasets. We propose a guided graph diffusion framework that generates individually rational utility vectors while approximating the NBS without frontier knowledge at inference time. Negotiations are modeled as directed graphs with graph attention capturing asymmetric agent attributes, and a conditional diffusion model maps these to utility vectors. A differentiable composite guidance loss, applied in the final reverse diffusion steps, penalizes IR violations and Nash product gaps. We prove that, under sufficient penalty weighting, solutions enter the IR region in finite time. Across datasets, the method achieves 100\% IR compliance. Nash efficiency reaches 99.45\% on synthetic data (within 0.55 percentage points of an oracle), and 54.24\% (CaSiNo) and 88.67\% (Deal or No Deal), improving 20–60 percentage points over unconstrained generative baselines.
\end{abstract}

\begin{keyword}
Automated negotiation \sep 
Cooperative game theory \sep 
Nash Bargaining Solution \sep 
Diffusion models \sep 
Graph attention networks \sep 
Normative guidance
\end{keyword}

\end{frontmatter}

\section{Introduction}
\label{sec:introduction}

Autonomous artificial intelligence (AI) agents agents are increasingly being deployed as intermediaries in high-stakes resource allocation processes, including supply chain contracting, spectrum allocation in wireless networks, legal dispute resolution, and international diplomatic coordination \cite{jonker2017automated, vaccaro2025advancing}. In each of these domains, the agent is tasked with proposing a division of utility that multiple stakeholders may either accept or reject. The primary challenge in such applications is not solely computational tractability, given the maturity of solvers for constrained optimization problems, but rather the achievement of normative trustworthiness. Specifically, the system must generate outcomes that are simultaneously individually rational (no participant receives less than their outside option), Pareto efficient (no feasible reallocation can make at least one participant better off without making another worse off), and consistent with well justified axiomatic fairness criteria. Absent these formal guarantees, such systems cannot be considered suitable for responsible deployment in high-stakes environments, irrespective of their average performance on held out evaluation data.

To satisfy these stringent requirements, cooperative game theory offers a precise normative benchmark. The seminal contribution of \cite{nash1950bargaining} established that four axioms, namely Pareto efficiency, symmetry, scale invariance, and independence of irrelevant alternatives, uniquely characterize NBS.

This solution maximizes the product of the agents' surplus utilities above their respective disagreement points. It implicitly incorporates individual rationality as a necessary condition and selects the particular Pareto efficient allocation that optimally trades off joint gains. Subsequent work \cite{rubinstein1982perfect} showed that the behavior of rational, strategically interacting agents in an alternating offers bargaining game converges to this solution in the limit as their discount factors approach one, thereby grounding the concept in both axiomatic and noncooperative foundations. Thus, satisfaction of individual rationality together with consistency with NBS forms the core set of requirements for a trustworthy autonomous negotiating agent.

Despite the conceptual precision of this theoretical objective, the dominant methodological paradigm in contemporary artificial intelligence (AI) negotiation research is structurally misaligned with these game-theoretic guarantees. Prevailing approaches predominantly rely on training models directly on corpora of human negotiation dialogues. However, human negotiators exhibit bounded rationality~\cite{simon1955behavioral}: they systematically anchor on initial offers \cite{tversky1974judgment}, display loss-averse preferences under uncertainty \cite{kahneman1979prospect}, and frequently terminate interactions by accepting early agreements rather than continuing to explore potentially more favorable negotiated outcomes.

These behavioral regularities induce empirical utility allocations that are heavily concentrated at interior point outcomes, substantially below the Pareto frontier. When supervised or generative models are trained to approximate these empirical distributions via divergence minimization, they internalize and reproduce the suboptimality, power asymmetries, and cognitive biases embedded in the training data \cite{ntoutsi2020bias, dinnar2021ai}. Prior work further demonstrates that systems explicitly engineered with fairness oriented objectives can nonetheless reinstantiate demographic disparities when assessed in ecologically valid settings \cite{kramar2022negotiation}. 

The empirical implications are pronounced. On standard human negotiation corpora such as CaSiNo \cite{chawla-etal-2021-casino}, widely used generative baselines fail almost entirely to optimize for joint surplus, attaining near zero efficiency. This failure arises not from insufficient representational capacity of the underlying architectures, but from their faithful reproduction of a training distribution that is systematically displaced from the theoretical optimum.

Two fundamental structural limitations underlie this failure in current generative models. First, they do not satisfy game theoretic axioms by design. Post hoc projection of an already biased generative sample onto an axiomatic solution set is inadequate. When the initial sample is concentrated in the vicinity of the disagreement point, a purely radial projection to the Pareto frontier cannot recover the directional information necessary to identify the surplus maximizing allocation. Achieving axiomatic compliance instead requires an iterative guidance mechanism integrated into the generative process itself, such that the sampling trajectory is actively steered toward axiom consistent regions of the outcome space. Second, dominant architectures represent negotiation states as flat vectors. This encoding obscures inter agent relational asymmetries, violates permutation invariance, and impedes rigorous counterfactual analysis \cite{renting2024general}.

To address these specific limitations, this work introduces a guided graph diffusion framework for equitable utility generation. The architecture integrates three principal components designed to mitigate empirical bias. First, a strategic graph encoder represents the negotiation as a directed graph, which is processed by a graph attention network \cite{brinkmann2023machine} to produce a relational context embedding.

This embedding captures asymmetric agent attributes (disagreement points, budgets, priority weights) while preserving permutation invariance at the graph level. Second, a conditional diffusion process employs this embedding to condition a denoising network that generates joint utility vectors \cite{song2021denoising}. Third, a normative guidance mechanism introduces a differentiable composite loss function that simultaneously penalizes Nash product shortfalls, violations of individual rationality, and deviations from the Pareto frontier. To enhance training stability, this gradient based correction is applied only during the final portion of the reverse diffusion process. This late activation schedule enables the generative model to first learn a coarse structure that is proximate to the Pareto set before enforcing fine grained, directionally informed optimization.

The proposed framework is assessed on three datasets that jointly span a spectrum from analytical tractability to empirical complexity: a synthetic corpus for which optimal solutions are exactly computable, the CaSiNo human negotiation corpus \cite{chawla-etal-2021-casino}, and the Deal or No Deal corpus \cite{lewis2017dealornodeal}. Across all evaluated domains, the guided model is mathematically designed to strongly enforce individual rationality constraints, with theoretical backing under defined conditions. In addition, it attains near oracle Nash efficiency on the synthetic benchmark and substantially outperforms the strongest unconstrained generative baselines on human negotiation datasets, yielding large absolute gains in performance without requiring access to the exact Pareto frontier geometry at inference time.

This work offers the following key contributions:

\begin{enumerate}

\item Guided Graph Diffusion for Utility Generation: A generative architecture coupling a relational graph encoder with a conditional diffusion process and a differentiable guidance loss. The system produces utility allocations that satisfy individual rationality by construction and align closely with optimal bargaining solutions across both synthetic and empirical domains.

\item Differentiable Normative Guidance Loss: The formulation of a three term composite guidance loss utilizing smooth penalties. This ensures everywhere nonzero gradients for constraint enforcement and surplus maximization within the reverse diffusion process, supported by formal convergence analysis establishing finite time entry into the feasible set.

\item Activation Window Guidance Mechanism: An empirically and theoretically validated design strategy that restricts normative guidance to the final fraction of reverse diffusion steps. Multidimensional sensitivity analysis confirms this window prevents early stage gradient over constraint and serves as the primary determinant of high generation efficiency.

\item Systematic Empirical Evaluation: A comprehensive benchmark across three datasets, six baseline methods, and four evaluation metrics. The results demonstrate that the proposed framework uniquely satisfies game theoretic axioms and escapes the distributional bias of human negotiation corpora without relying on post hoc mathematical solvers.

\end{enumerate}

Section \ref{sec:background_and_problem} establishes the game theoretic foundations and surveys prior work in automated negotiation, generative modeling, and algorithmic fairness. Section \ref{sec:methodology} formalizes the proposed framework, including the graph encoding, diffusion process, guidance mechanism, and theoretical guarantees. Section \ref{sec:experimental_setup} details the datasets, architectural configurations, and evaluation metrics. Section \ref{sec:results} presents the sensitivity analysis, axiomatic compliance results, latent trajectory dynamics, and baseline comparisons. Section \ref{sec:discussion} interprets the empirical findings and discusses systemic limitations. Finally, Section \ref{sec:conclusion} summarizes the contributions and outlines directions for future research.

\section{Background and Problem Statement}
\label{sec:background_and_problem}

This section establishes the theoretical foundations on which the proposed framework rests and surveys the prior literature to formalize the computational problem. Cooperative game theory provides the normative reference point against which all prior computational approaches are measured. The formal goal of AI mediated negotiation is the construction of an equitable utility allocation across multiple agents in a non transferable utility environment. In these settings, agents cannot freely exchange utility and the feasible space of agreements is determined by the underlying resource structure. Any outcome a rational agent would accept must satisfy individual rationality, requiring that no agent receives less than their respective disagreement point or outside option \cite{fisher1981getting}. An outcome failing this condition is voluntarily rejected. Individual rationality alone admits highly asymmetric allocations.

NBS provides the unique Pareto efficient allocation that maximizes the Nash product \cite{nash1950bargaining, rubinstein1982perfect}. The work in \cite{myerson1984two} extended these foundational results to settings with incomplete information, establishing conditions under which this specific mathematical solution remains the appropriate fairness target even when agents possess private valuations.

Translating these game theoretic guarantees into computational systems has produced distinct strategies over the past decades. The taxonomy provided in \cite{baarslag2013evaluating} distinguishes concession based approaches from utility maximizing approaches. The Automated Negotiating Agents Competition surveyed in \cite{jonker2017automated} has yielded competitive agents for controlled multi-round settings. However, these classical agents share a structural dependency by assuming direct access to the opponent utility function or estimating it online through repeated interaction. This dependency is incompatible with the single-shot generation setting considered in this work, where a utility allocation must be produced directly from agent feature representations. Furthermore, classical solvers require explicit analytical knowledge of the Pareto frontier geometry to identify the optimum, which is unavailable for real world corpora like CaSiNo and Deal or No Deal.

To overcome the frontier knowledge requirement inherent to classical solvers, models are increasingly trained directly on human dialogue corpora. The sequence to sequence architecture in \cite{lewis2017dealornodeal} demonstrated that end-to-end agents can produce valid agreements. However, training on human data replicates suboptimal empirical behaviors rather than optimizing joint utility. The work in \cite{he2018decoupled} identified this distributional absorption problem and proposed decoupling strategy from language to reduce stylistic overfitting. This separation does not resolve utility bias because the strategy module still learns from empirically flawed human negotiated outcomes. Reinforcement learning via reward signal fine tuning \cite{ziegler2019fine} can shift output distributions, but agents often adopt deceptive strategies when rewards do not strictly penalize individual rationality violations \cite{cao2018emergent}. Furthermore, supervised training on historically inequitable data amplifies pre existing disparities \cite{ntoutsi2020bias}. AI agents trained on historical bargaining records reproduce documented demographic disparities in final outcomes \cite{dinnar2021ai}, and systems with explicit fairness objectives still reinstantiate these disparities in ecologically valid settings \cite{kramar2022negotiation}.

As alternatives to direct regression, generative models learn a distribution over outputs. Conditional Variational Autoencoders \cite{sohn2015learning} encode the target output as a latent variable conditioned on an input context. The evidence lower bound objective provides only a lower bound on the true log likelihood, and the approximation gap widens in high variance settings. Conditional Generative Adversarial Networks \cite{mirza2014conditional} learn conditional distributions through adversarial training but are subject to mode collapse and training instability. Neither architecture provides mechanisms to enforce axiomatic constraints on generated outputs, as their respective reconstruction and adversarial objectives are mathematically agnostic to game theoretic compliance.

Addressing the instability of prior generative architectures, Denoising Diffusion Probabilistic Models \cite{ho2020denoising} reformulate generative modeling as a learned reverse process that gradually denoises a noise initialized sample via a score matching objective. Denoising Diffusion Implicit Models \cite{song2021denoising} provide a deterministic non Markovian variant that achieves comparable sample quality with fewer inference steps. Steering diffusion samples toward desired properties during inference was introduced via classifier guidance \cite{dhariwal2021diffusion} and later classifier free guidance \cite{ho2022classifier}. These methods were developed exclusively for continuous perceptual attributes in image generation and do not provide mechanisms for enforcing combinatorial or game theoretic constraints. Diffusion models have been applied successfully to molecular structure \cite{hoogeboom2022equivariant}, but they have not been adapted for game theoretic utility generation utilizing differentiable axiomatic guidance.

The necessity of axiomatic guidance is further underscored by the limitations of standard algorithmic fairness frameworks. The machine learning fairness literature proposes distinct operationalizations of equitable treatment. Individual fairness requires similar individuals to receive similar outcomes \cite{dwork2012fairness}. Group level criteria focus on error rate parity across demographic groups \cite{hardt2016equality}. Counterfactual fairness requires outcome invariance under interventions on protected attributes \cite{kusner2017counterfactual}. These statistical and causal frameworks differ entirely from the axiomatic criterion grounded in cooperative game theory. An outcome can satisfy demographic parity while simultaneously violating individual rationality for a specific agent. Trustworthy AI in high-stakes deployment contexts requires normative constraints that are formally specified and computationally enforced at inference time \cite{vaccaro2025advancing}. 

Consequently, the precise formulation of equitable utility vectors in AI mediated negotiation systems requires satisfying these strict game theoretic axioms in non transferable utility settings. The computational task is to generate an outcome that guarantees strict individual rationality, ensuring each agent achieves a payoff greater than or equal to their outside option, while simultaneously maximizing the product of the surplus gains achieved by all negotiating parties. The primary methodological challenge is executing this constrained generation without analytical access to the Pareto frontier geometry and without internalizing the suboptimal distributional biases present in historical training corpora. The system must produce these normatively compliant utility vectors in a single-shot inference process using only the asymmetric contextual features of the negotiating agents. Resolving this overarching problem requires a novel generative mechanism that decouples state representation from probability modeling and enforces mathematical axioms dynamically during generation.

\section{Methodology}
\label{sec:methodology}

This section introduces a formal framework for the construction of equitable utility vectors, $\mathbf{u} \in \mathbb{R}^n$ with $n \ge 2$, in the context of AI-mediated negotiations. To attenuate biases embedded in historically supervised datasets, we recast the utility-generation problem as a guided stochastic process that is rigorously constrained by foundational axioms from game theory.

\begin{figure}[htbp]
\centering
\begin{tikzpicture}[
    >=stealth,
    node distance=1.0cm and 1 cm,
    font=\sffamily\small,
    block/.style={
        rectangle, draw, rounded corners,
        fill=blue!5, text width=2.6cm,
        align=center, minimum height=1.2cm},
    emb/.style={
        rectangle, draw,
        fill=green!10, text width=1.6cm,
        align=center, minimum height=0.8cm},
    mlp/.style={
        rectangle, draw, rounded corners,
        fill=purple!7, text width=3.8cm,
        align=center, minimum height=1.4cm},
    temb/.style={
        rectangle, draw, rounded corners,
        fill=orange!8, text width=2.4cm,
        align=center, minimum height=0.9cm},
    loss/.style={
        rectangle, draw, rounded corners,
        fill=red!10, text width=3.2cm,
        align=center, minimum height=1.2cm,
        dashed},
    arrow/.style={->, thick},
    guidance/.style={->, thick, dashed, red!70!black}
]

\node[block, text width=2.6cm] (features)
    {Agent Features $\mathbf{x}_1, \mathbf{x}_2 \in \mathbb{R}^k$\\
     {\scriptsize negotiation attributes}};

\node[block, right=of features] (gat)
    {GATv2 Layer\\
     {\scriptsize $n$-node graph}\\
     {\scriptsize $K$ heads}};

\node[emb, right=0.9cm of gat] (h)
    {$h \in \mathbb{R}^{d_h}$};

\node[left=1.4cm of gat, yshift=-2.8cm] (noise)
    {Noise $\mathbf{u}_T \sim \mathcal{N}(\mathbf{0},\mathbf{I})$};

\node[temb, below=2.4cm of gat] (temb)
    {Sinusoidal\\Time Emb.\\
     {\scriptsize $t_{\text{emb}} \in \mathbb{R}^{d_t}$}};

\node[draw, circle, fill=gray!10, inner sep=2pt,
      right=0.9cm of h, yshift=-2.8cm] (cat)
    {\scriptsize cat};

\node[mlp, right=1.4cm of cat] (mlp)
    {MLP Denoiser\\
     $s_\theta(\mathbf{u}_t,\, t,\, h)$\\
     {\scriptsize Linear$(d_z \to d_{\text{mlp}} \to n)$}};

\node[below=1.6cm of mlp] (u0)
    {Utility $\mathbf{u}_0$};

\node[loss, above=1.6cm of mlp] (norm)
    {Normative Guidance\\
     $\mathcal{L}_{\text{guide}}$};
\node[above=0.15cm of norm, font=\scriptsize\sffamily]
    {Nash Barrier $+$ Strict IR};

\draw[arrow] (features) -- (gat);
\draw[arrow] (gat)      -- (h);
\draw[arrow] (h.east)   -- ++(0.3,0) |- (cat.north);
\draw[arrow] (noise.east) -- ++(0.3,0) |- (cat.west);
\draw[arrow] (temb.east) -- ++(0.3,0) |- (cat.south);

\node[font=\scriptsize, above=0.05cm of cat, xshift=-0.9cm]
    {$[\mathbf{u}_t \| t_{\text{emb}} \| h]$};
\node[font=\scriptsize, below=0.05cm of cat, xshift=0.3cm]
    {\scriptsize $\mathbb{R}^{d_z}$};

\draw[arrow] (cat) -- (mlp);
\draw[arrow] (mlp) -- node[right] {Sample} (u0);
\draw[guidance] (norm) --
    node[right, font=\scriptsize, text=black]
        {$\nabla_{\hat{\mathbf{u}}_0} \mathcal{L}_{\text{guide}}$}
    (mlp);

\node[font=\scriptsize, below=0.05cm of temb]
    {\scriptsize $\omega_j = 10000^{-2j/D}$};

\node[draw, dotted, rounded corners,
      fit=(features)(gat)(h),
      inner sep=0.35cm,
      label=above:{\textbf{Strategic Graph Encoding}}] {};

\node[draw, dotted, rounded corners,
      fit=(temb)(noise)(cat)(mlp),
      inner sep=0.35cm,
      label=below:{\textbf{Conditional MLP Diffusion}}] {};

\end{tikzpicture}
\caption{%
    System architecture for equitable utility generation.
\textbf{Strategic Graph Encoding:}
Agent feature vectors $\mathbf{x}_1, \mathbf{x}_2 \in \mathbb{R}^k$, encoding negotiation attributes (disagreement points, resource constraints, preference weights), are processed by a Graph Attention Network v2 (GATv2) with $K$ attention heads over an $n$-node directed graph, producing a shared context embedding $h \in \mathbb{R}^{d_h}$.
\textbf{Conditional Multilayer Perceptron (MLP) Diffusion:}
At each reverse diffusion step, $h$ is concatenated with the noisy state $\mathbf{u}_t \in \mathbb{R}^n$ and sinusoidal time embedding $t_{\text{emb}} \in \mathbb{R}^{d_t}$ ($\omega_j = 10000^{-2j/D}$), forming a $d_z$-dimensional input ($d_z = n + d_t + d_h$) to the MLP denoiser $s_\theta$, which predicts noise $\hat{\boldsymbol{\varepsilon}}$.
\textbf{Normative Guidance:}
In the final $t_{\text{start}}$ fraction of denoising steps, the gradient $\nabla_{\hat{\mathbf{u}}_0} \mathcal{L}_{\text{guide}}$ steers $\hat{\mathbf{u}}_0$ toward Individual Rationality and NBS using the guidance loss.
}
\label{fig:architecture}
\end{figure}
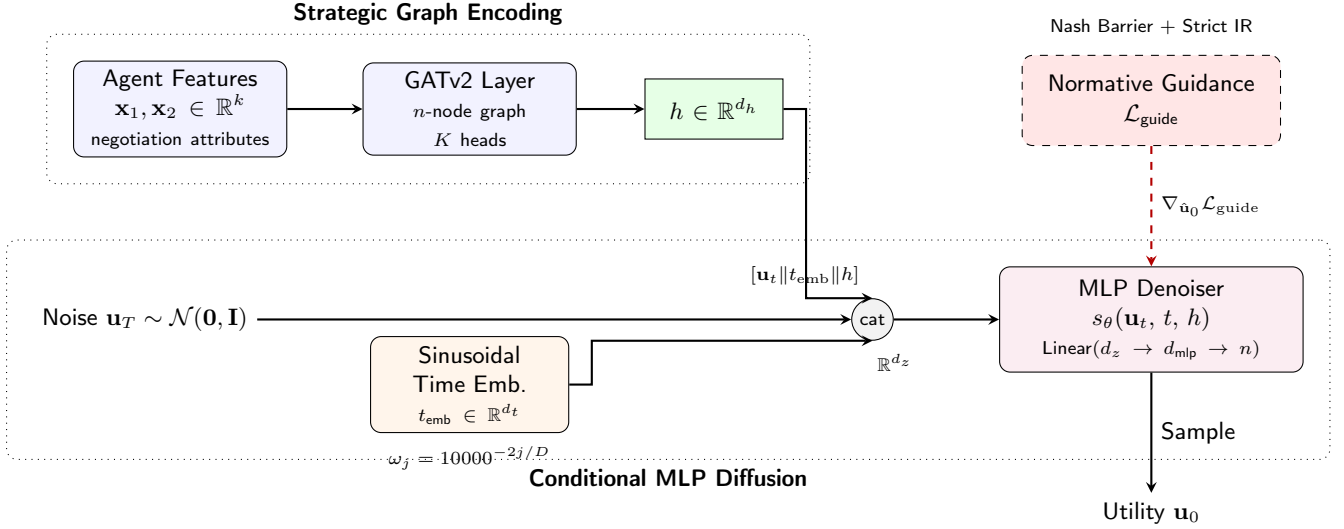

The proposed architecture (Figure~\ref{fig:architecture}) consists of three components. First, each agent’s structured negotiation state, comprising disagreement points $d_i$, resource constraints, and preference representations, is encoded as a directed graph with $n$ nodes. A Graph Attention Network v2 (GATv2)~~\cite{brody2022attentivegraphattentionnetworks} layer then processes this graph to generate a dense strategic embedding $h \in \mathbb{R}^{d_h}$ that captures inter-agent asymmetries and relational dependencies. Second, a conditional multilayer perceptron (MLP) diffusion model generates utility trajectories via denoising diffusion implicit model (DDIM)~\cite{song2020denoising} inference, conditioned on both $h$ and a sinusoidal time embedding. Finally, a normative guidance mechanism applies step-wise, gradient-based corrections to the clean utility estimate $\hat{\mathbf{u}}_0$ during the final $t_{\text{start}}$ fraction of denoising steps. 

Table~\ref{tab:notation} provides an overview of the principal mathematical notation employed throughout this formulation, together with a description of their respective conceptual roles within the model.

\begin{table}[h!]
\caption{Mathematical notation and implementation parameters.}
\label{tab:notation}
\centering
\footnotesize
\renewcommand{\arraystretch}{1.2}

\begin{tabularx}{\textwidth}{l l X  l l X}
\toprule
\textbf{Symbol} & \textbf{Code} & \textbf{Description} 
& \textbf{Symbol} & \textbf{Code} & \textbf{Description} \\
\midrule

$n$ & --- & Number of negotiating agents.
& $r$ & \texttt{radius} & Pareto frontier radius. \\

$\mathbf{u}$ & --- & Joint utility allocation vector.
& $\mathbf{u}^{*}_{\text{NBS}}$ & --- & NBS. \\

$d_i$ & --- & Disagreement point of agent $i$.
& $\mathbf{d}$ & --- & Vector of disagreement points. \\

$k$ & \texttt{feat\_dim} & Agent feature dimension.
& $K$ & \texttt{num\_heads} & Number of GATv2 heads. \\

$d_h$ & \texttt{embed\_dim} & Graph embedding dimension.
& $h$ & --- & Context embedding. \\

$T$ & \texttt{T} & Diffusion timesteps.
& $S$ & \texttt{steps} & DDIM inference steps. \\

$\beta_t$ & --- & Noise variance at step $t$.
& $\bar{\alpha}_t$ & --- & Cumulative signal retention. \\

$d_t$ & \texttt{time\_dim} & Time embedding dimension.
& $d_z$ & --- & Denoiser input dimension. \\

$d_{\text{mlp}}$ & \texttt{hidden} & MLP hidden width.
& $c_{\text{out}}$ & \texttt{scale\_out} & Output scaling factor. \\

$s_\theta$ & --- & MLP denoiser.
& $\hat{\boldsymbol{\varepsilon}}$ & --- & Predicted noise. \\

$\hat{\mathbf{u}}_0$ & --- & Clean utility estimate.
& $\hat{\mathbf{u}}_0^{\text{guided}}$ & --- & Guided estimate. \\

$c_{\max}$ & \texttt{clip\_u0} & Upper bound on $\hat{\mathbf{u}}_0$.
& $c_{\text{drift}}$ & \texttt{clip\_ut} & Clamp for latent drift. \\

$\lambda$ & \texttt{lambda\_guide} & Guidance step size.
& $t_{\text{start}}$ & \texttt{guide\_start\_frac} & Guidance start fraction. \\

$\alpha$ & \texttt{alpha\_norm} & Nash loss weight.
& $\beta$ & \texttt{beta\_ir} & IR penalty weight. \\

$\gamma$ & \texttt{gamma\_frontier} & Frontier penalty weight.
& $\delta$ & --- & IR margin constant. \\

$\epsilon$ & --- & Numerical stability constant.
& $\Pi_{\mathcal{F}}$ & --- & Projection onto feasible set. \\

\bottomrule
\end{tabularx}
\end{table}

\subsection{Strategic Graph Encoding}
\label{subsec:graph_encoding}

Conventional flat vectorial representations obscure relational asymmetries and fail to respect permutation invariance. To overcome these limitations, we model the negotiation state as a directed graph $G = (V, E)$ with $|V| = n$ nodes (representing the bilateral case when $n = 2$, with a straightforward generalization to multilateral scenarios). Each node $i \in V$ is associated with a $k$-dimensional strategic feature vector $\mathbf{x}_i \in \mathbb{R}^k$. This vector encodes domain-specific negotiation attributes, such as the disagreement point $d_i$ (forming the vector $\mathbf{d}$), resource constraints, and preference orientations. Directed edges represent the dyadic relational structure and constraints between agents.

To extract a unified conditioning signal from this structure, we employ a GATv2 layer. Unlike standard graph convolutions, GATv2 uses a dynamic attention mechanism that weights the influence of agent $j$ on agent $i$ based on their joint strategic state. For each attention head, the unnormalised attention score is computed as:
\begin{equation}
    e_{ij} = \mathbf{a}^{\top}
    \operatorname{LeakyReLU}\!\left(
        \mathbf{W}\,[\mathbf{x}_i \,\|\, \mathbf{x}_j]
    \right)
    \label{eq:gat_score}
\end{equation}
where $\mathbf{W} \in \mathbb{R}^{d_h \times 2k}$ and $\mathbf{a} \in \mathbb{R}^{d_h}$ are learnable parameters, and $\|$ denotes concatenation. These scores are then normalised across the neighbourhood $\mathcal{N}(i)$ via a softmax function:
\begin{equation}
    \alpha_{ij} =
    \frac{\exp(e_{ij})}
         {\displaystyle\sum_{\ell \in \mathcal{N}(i)} \exp(e_{i\ell})}
    \label{eq:gat_alpha}
\end{equation}
The updated node representation for head $m$ is obtained by aggregating the weighted features:
\begin{equation}
    \mathbf{h}_i^{(m)} =
    \sigma_{\text{act}}\!\left(
        \sum_{j \in \mathcal{N}(i)}
        \alpha_{ij}^{(m)}\, \mathbf{W}^{(m)} \mathbf{x}_j
    \right)
    \label{eq:gat_node}
\end{equation}
where $\sigma_{\text{act}}$ denotes a non-linear activation function. Using $K$ parallel attention heads, the per-node outputs are concatenated and passed through a linear projection, followed by layer normalisation, to yield $\mathbf{h}_i \in \mathbb{R}^{d_h}$. Finally, a global mean-pooling readout produces the unified conditioning embedding:
\begin{equation}
    h = \frac{1}{|V|} \sum_{i \in V} \mathbf{h}_i
    \;\in\; \mathbb{R}^{d_h}
    \label{eq:pooling}
\end{equation}
This formulation ensures that $h$ is permutation-invariant while remaining sensitive to the asymmetric strategic constraints of the negotiating agents. This embedding is computed once per negotiation context and is held fixed, providing a stable geometric prior throughout all subsequent diffusion inference steps.

\subsection{Conditional MLP Diffusion Process}
\label{subsec:diffusion}

Having distilled the structural asymmetries of the negotiation into the strategic context embedding $h$, we formulate the generative engine as a conditional Denoising Diffusion Probabilistic Model (DDPM). The objective of this module is to synthesise a joint utility vector $\mathbf{u}_0 \in \mathbb{R}^n$ that represents an equitable $n$-party allocation.

The formulation begins with the forward diffusion process. A clean utility sample $\mathbf{u}_0$ is progressively corrupted into isotropic Gaussian noise over $T$ discrete timesteps via a predefined variance schedule $\{\beta_t\}_{t=1}^{T}$:
\begin{equation}
    q(\mathbf{u}_t \mid \mathbf{u}_0)
    = \mathcal{N}\!\left(
        \mathbf{u}_t;\;
        \sqrt{\bar\alpha_t}\,\mathbf{u}_0,\;
        (1 - \bar\alpha_t)\,\mathbf{I}
      \right),
    \qquad
    \bar\alpha_t = \prod_{s=1}^{t}(1 - \beta_s)
    \label{eq:forward}
\end{equation}
This allows for the direct sampling of the noisy state at any arbitrary timestep $t$ as $\mathbf{u}_t = \sqrt{\bar\alpha_t}\,\mathbf{u}_0 + \sqrt{1 - \bar\alpha_t}\,\boldsymbol{\varepsilon}$, where $\boldsymbol{\varepsilon} \sim \mathcal{N}(\mathbf{0}, \mathbf{I})$, thereby bypassing sequential simulation. 

To invert this corruption, the reverse process is parameterised by a denoiser $s_\theta(\mathbf{u}_t, t, h)$ that predicts the injected noise $\hat{\boldsymbol{\varepsilon}}$. We adopt a Multilayer Perceptron (MLP) architecture for this task. 

At each inference step, the MLP receives a $d_z$-dimensional input vector formed by concatenating the noisy utility state $\mathbf{u}_t \in \mathbb{R}^n$, a sinusoidal time embedding $t_{\text{emb}} \in \mathbb{R}^{d_t}$, and the fixed graph context $h \in \mathbb{R}^{d_h}$:
\begin{equation}
    \mathbf{z}
    = [\,\mathbf{u}_t \;\|\; t_{\text{emb}} \;\|\; h\,]
    \;\in\; \mathbb{R}^{d_z}, \qquad
    d_z = n + d_t + d_h
    \label{eq:concat}
\end{equation}
The sinusoidal embedding encodes the normalised timestep $\tilde{t} = t/T \in [0,1]$ using frequency components $\omega_j = 10000^{-2j/D}$ for $j = 0, \ldots, D/2 - 1$, followed by a two-layer projection. The denoiser maps $\mathbf{z}$ to the predicted noise via a stack of fully-connected layers, interspersed with Layer Normalisation (LN) and SiLU activations:
\begin{equation}
    s_\theta:\;
    \mathbb{R}^{d_z}
    \xrightarrow{\;\text{Linear}\;}
    \xrightarrow{\;\text{LN} + \text{SiLU}\;}
    \xrightarrow{\;\text{Linear}\;}
    \xrightarrow{\;\text{LN} + \text{SiLU}\;}
    \xrightarrow{\;\text{Linear}\;}
    \hat{\boldsymbol{\varepsilon}} \in \mathbb{R}^n
    \label{eq:mlp}
\end{equation}
To ensure numerical stability, the final layer output is scaled by a small constant $c_{\text{out}} \ll 1$, which prevents the clean estimate $\hat{\mathbf{u}}_0$ from exploding at high noise levels where $\sqrt{\bar\alpha_t} \approx 0$. 

During inference, the model generates utility vectors using $S$ deterministic DDIM steps. At each step, a clean utility estimate is first recovered from the noisy state:
\begin{equation}
    \hat{\mathbf{u}}_0
    = \operatorname{clip}\!\left(
        \frac{
            \mathbf{u}_t
            - \sqrt{1-\bar\alpha_t}\;\hat{\boldsymbol\varepsilon}
        }{\sqrt{\bar\alpha_t}},\;
        0,\; c_{\max}
      \right)
    \label{eq:u0hat}
\end{equation}
where $c_{\max}$ is an upper clamp bound that stabilises the estimate. Once recovered, $\hat{\mathbf{u}}_0$ is subjected to normative guidance (detailed in Section~\ref{subsec:guidance}) to yield $\hat{\mathbf{u}}_0^{\text{guided}}$. The DDIM update then propagates this guided estimate to the subsequent step:
\begin{equation}
    \mathbf{u}_{t-1}
    = \sqrt{\bar\alpha_{t-1}}\;\hat{\mathbf{u}}_0^{\text{guided}}
    + \sqrt{1-\bar\alpha_{t-1}}\;\hat{\boldsymbol\varepsilon}
    \label{eq:ddim}
\end{equation}
To prevent cumulative latent drift from compounding over multiple guided steps, the intermediate state is soft-clamped: $\mathbf{u}_{t-1} \leftarrow \operatorname{clip}(\mathbf{u}_{t-1},\, -c_{\text{drift}},\, c_{\text{drift}})$. At the terminal step, rather than returning the DDIM output (which collapses to $\hat{\boldsymbol\varepsilon}$ when $\bar\alpha_{\text{next}} = 0$), the model directly outputs the final guided estimate $\hat{\mathbf{u}}_0^{\text{guided}}$.

\paragraph{Training Objective.}
Model optimisation proceeds in two phases. In Phase~1, the denoiser is trained using a standard denoising score-matching loss~\cite{ho2020denoising} to accurately predict the injected noise:
\begin{equation}
    \mathcal{L}_{\text{MSE}}
    = \mathbb{E}_{t,\,\mathbf{u}_0,\,\boldsymbol\varepsilon}
      \!\left[\,
        \left\|
          \boldsymbol\varepsilon
          - s_\theta\!\left(
              \sqrt{\bar\alpha_t}\,\mathbf{u}_0
              + \sqrt{1-\bar\alpha_t}\,\boldsymbol\varepsilon,\;
              t,\; h
            \right)
        \right\|^2
      \right]
    \label{eq:mse}
\end{equation}
In Phase~2, a normative regulariser is added:
$\mathcal{L} = \mathcal{L}_{\text{MSE}} + \mathcal{L}_{\text{guide}}$,
where $\mathcal{L}_{\text{guide}}$ is applied to the clipped estimate
$\hat{\mathbf{u}}_0$ and is formally defined in
Section~\ref{subsec:guidance} (Eq.~\ref{eq:lguid}).

\subsection{Normative Guidance via Differentiable Nash Bargaining}
\label{subsec:guidance}

As established in Section~\ref{subsec:diffusion}, relying solely on standard score-matching risks replicating the historical biases embedded in the training data. To counteract this and steer the generative process toward equitable outcomes in Non-Transferable Utility (NTU) environments, we introduce a differentiable guidance loss, $\mathcal{L}_{\text{guide}}$. This function serves a dual purpose: it acts as a normative regularizer during Phase 2 training and as a dynamic gradient guide during inference. To prevent premature constraint interference, this guidance is applied dynamically only during the final $t_{\text{start}}$ fraction of reverse denoising steps (i.e., when $t/T < t_{\text{start}}$).

Let $d_i$ denote the disagreement point for agent $i$. The three-term guidance loss for the $n$-party case is formulated as:

\begin{equation}
\begin{aligned}
\mathcal{L}_{\text{guide}}(\mathbf{u}, \mathbf{d})
&=
\underbrace{
-\alpha \sum_{i=1}^{n}
\log\!\Big(
\operatorname{softplus}(u_i - d_i) + \epsilon
\Big)
}_{\mathcal{L}_N:\;\text{Nash efficiency}}
+
\underbrace{
\beta \sum_{i=1}^{n}
\operatorname{softplus}(d_i - u_i + \delta)
}_{\mathcal{L}_{IR}:\;\text{strict IR}}
\\[4pt]
&\quad +
\underbrace{
\gamma \cdot \operatorname{softplus}(\|\mathbf{u}\|^2 - r^2)
}_{\mathcal{L}_F:\;\text{frontier adherence}}
\end{aligned}
\label{eq:lguid}
\end{equation}

where $\operatorname{softplus}(x) = \log(1 + e^x)$ is utilized in place of hard maximum operators to ensure smooth, everywhere-nonzero gradients. Within this formulation, $\epsilon > 0$ provides numerical stability; $\delta > 0$ acts as a strictness margin to enforce $u_i \geq d_i$ with a conservative buffer; and $r > 0$ defines the Pareto frontier radius, bounding the feasible set $\mathcal{F} = \{\mathbf{u} \geq \mathbf{0},\; \|\mathbf{u}\| \leq r\}$.

At each guided inference step, the gradient $g = \nabla_{\hat{\mathbf{u}}_0} \mathcal{L}_{\text{guide}}$ is computed via automatic differentiation. To scale the update proportionally to the current signal-to-noise ratio, we employ an adaptive step size $w = \lambda\sqrt{\bar\alpha_t}$. This provides strong, necessary corrections at low noise levels while avoiding destabilizing interference when the noise variance is high. The guided utility estimate is computed as:

\begin{equation}
    \hat{\mathbf{u}}_0^{\text{guided}}
    = \Pi_{\mathcal{F}}\!\left(
        \hat{\mathbf{u}}_0
        - \frac{w\, g}{\|g\| + \epsilon}
      \right)
    \label{eq:guidance_step}
\end{equation}

where $\Pi_{\mathcal{F}}$ is the Euclidean projection onto the feasible set, executed by first clamping negatives ($\mathbf{u} \leftarrow \max(\mathbf{u}, \mathbf{0})$) and subsequently projecting onto the radius ($\mathbf{u} \leftarrow \mathbf{u} \cdot r / \max(\|\mathbf{u}\|, r)$). Gradient normalisation is applied prior to projection to prevent the three disparate terms in $\mathcal{L}_{\text{guide}}$ from producing disproportionately large updates when the state operates far from the Pareto frontier.

Crucially, the Softplus-based formulation is adopted over hard $\max(0, \cdot)$ and $\log(\max(\epsilon, \cdot))$ operators for two optimization-critical reasons. First, its derivative, $\partial \operatorname{softplus}(x) / \partial x = \sigma(x)$, is bounded within $(0,1)$ for all $x$ (where $\sigma$ here denotes the standard logistic sigmoid), preventing gradient explosions when $\hat{\mathbf{u}}_0$ violates the Individual Rationality boundary by a large margin. Second, because the gradient remains nonzero everywhere, the guidance mechanism continues to provide a directional corrective signal even in deeply infeasible regions, whereas a hard maximum operator would produce a zero gradient and stall the recovery process.

\subsection{Theoretical Guarantees and Boundary Cases}
\label{subsec:theory}

We establish the theoretical properties of the guided reverse diffusion
process. Let $\mathbf{d} = [d_1, \ldots, d_n]^{\top}$ denote the vector of
disagreement points. The Individually Rational feasible region is
$\mathcal{F} = \{\mathbf{u} \in \mathbb{R}^n \mid u_i \geq d_i,\;
\forall i\}$. The reverse-time probability flow ODE governing the guided
trajectory is:
\begin{equation}
    d\mathbf{u}_t
    = \left[
        \mathbf{f}(\mathbf{u}_t, t)
        - \tfrac{1}{2}\,g^2(t)
          \Big(
            \nabla_{\mathbf{u}} \log p_t(\mathbf{u}_t)
            - \nabla_{\mathbf{u}}
              \mathcal{L}_{\text{guide}}(\mathbf{u}_t)
          \Big)
      \right] dt
    \label{eq:ode}
\end{equation}

We invoke the following standard assumptions:

\begin{assumption}[Manifold Geometry]
The diffusion model implicitly learns a data manifold $\mathcal{M}$ whose
upper-right boundary constitutes a strictly concave, continuous Pareto
frontier $\partial\mathcal{M}^+$.
\end{assumption}

\begin{assumption}[Score Boundedness]
The unconditional score function $\nabla_{\mathbf{u}} \log p_t(\mathbf{u}_t)$
is $L$-Lipschitz continuous and bounded:
$\|\nabla_{\mathbf{u}} \log p_t(\mathbf{u}_t)\| \leq M$ for some $M > 0$.
\end{assumption}

\begin{assumption}[Nash Concavity]
The Nash product
$\mathcal{N}(\mathbf{u}) = \prod_{i=1}^{n}(u_i - d_i)$ is strictly
log-concave over the convex set $\mathcal{M} \cap \mathcal{F}$.
\end{assumption}

Assumptions~1 and~3 hold for divisible resources with a convex feasible set.
For non-convex domains, the forward noising process acts as a Gaussian
smoothing operator that mathematically induces a convex manifold
$\mathcal{M}_\theta$ at higher noise scales, allowing the reverse process to
bypass local optima before annealing into the true frontier.

\begin{lemma}[Finite-Time Convergence to IR]
For any initial sample $\mathbf{u}_T \notin \mathcal{F}$, if the IR penalty
weight satisfies $\beta > M + \sup_{\mathbf{u} \notin \mathcal{F},\, t} \|\mathbf{f}(\mathbf{u}, t)\|$, the guided trajectory enters $\mathcal{F}$ in finite time.
\end{lemma}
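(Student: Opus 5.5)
We plan a Lyapunov-type argument based on the distance to the IR region. Let $\mathcal{F}=\{\mathbf{u}: u_i\ge d_i\}$ and take the violation functional
\[
V(\mathbf{u}) = \tfrac12 \sum_{i=1}^n \big(\max(0, d_i - u_i)\big)^2 = \tfrac12\,\mathrm{dist}(\mathbf{u},\mathcal{F})^2 .
\]
This $V$ is $C^1$, vanishes exactly on $\mathcal{F}$, and has gradient $\nabla V(\mathbf{u}) = -(d-\mathbf{u})_+$. The goal is to show that along the ODE~\eqref{eq:ode}, while $\mathbf{u}_t\notin\mathcal{F}$, we have $\tfrac{d}{dt}\sqrt{2V}\le -c<0$ for a constant $c$ (up to the time-reparametrisation implicit in reverse time). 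Then $\mathrm{dist}(\mathbf{u}_t,\mathcal{F})$ decreases at least linearly, and the trajectory reaches $\mathcal{F}$ by time $\mathrm{dist}(\mathbf{u}_T,\mathcal{F})/c<\infty$.

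The key step is to split the drift $\frac{d}{dt}\mathbf{u}_t$ into three parts: the score part $-\tfrac12 g^2 \nabla\log p_t$, the forward part $\mathbf{f}$, and the guidance part $\tfrac12 g^2\nabla \mathcal{L}_{\text{guide}}$. We orient time so that the guidance acts as descent on $\mathcal{L}_{\text{guide}}$. Each part is then paired with the unit direction $\mathbf{n}(\mathbf{u}) = (d-\mathbf{u})_+/\|(d-\mathbf{u})_+\|$, which points toward $\mathcal{F}$.

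By Assumption~2 the score contributes at most $M$ in magnitude. The forward drift contributes at most $\sup_{\mathbf{u}\notin\mathcal{F},t}\|\mathbf{f}\|$. The IR term of~\eqref{eq:lguid} gives $-\partial_{u_i}\mathcal{L}_{IR} = \beta\,\sigma(d_i-u_i+\delta)$. For every violated coordinate, $d_i-u_i>0$, so this is at least $\beta\sigma(\delta)\ge \beta/2$. Hence its projection onto $\mathbf{n}$ is bounded below by a multiple of $\beta$.

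The Nash term also pushes in the correct direction. Its $i$-th partial is $-\alpha\sigma(u_i-d_i)/(\mathrm{softplus}(u_i-d_i)+\epsilon)\le 0$, so $-\nabla\mathcal{L}_N$ has nonnegative components and therefore a nonnegative inner product with $\mathbf{n}$. The frontier term is more delicate. Its gradient $2\gamma\sigma(\|\mathbf{u}\|^2-r^2)\mathbf{u}$ can oppose $\mathbf{n}$ on coordinates with $u_i<0$, so we will either absorb it into the constant (treating $\gamma$ small, or noting that on the relevant region it is bounded) or restrict to $\gamma=0$ as the statement implicitly does. Collecting the pieces gives $\langle \dot{\mathbf{u}},\mathbf{n}\rangle \ge \tfrac12 g^2(t)\big(\kappa\beta - M - \sup\|\mathbf{f}\|\big)>0$, where $\kappa$ is the constant from the softplus lower bound.

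The main obstacle is matching constants so that the stated hypothesis $\beta > M+\sup\|\mathbf{f}\|$ actually suffices. The difficulties are the factor $\sigma(\delta)$ from the softplus derivative, the $\tfrac12 g^2(t)$ weighting on the score but not on $\mathbf{f}$, and the $\sqrt{n}$ versus $1$ loss when passing from a componentwise bound to a norm along $\mathbf{n}$. We would first handle the worst case, where a single coordinate is violated, and interpret $\beta$ as the effective per-coordinate magnitude of the IR gradient. This can be arranged by taking $\delta$ large enough that $\sigma(\delta)\approx 1$, or by absorbing the constant into $\beta$. We also need $g^2(t)$ bounded below on the time window, which holds for standard schedules. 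With these in place, the constant $c=\inf_t \tfrac12 g^2(t)(\kappa\beta-M-\sup\|\mathbf{f}\|)$ is positive, and a comparison argument for the absolutely continuous function $t\mapsto \mathrm{dist}(\mathbf{u}_t,\mathcal{F})$ finishes the proof.
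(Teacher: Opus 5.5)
You follow the same route as the paper: the Lyapunov function $V(\mathbf u)=\tfrac12\sum_i(\max(0,d_i-u_i))^2$, the claim that the IR part of $\nabla\mathcal L_{\text{guide}}$ dominates the score and the drift outside $\mathcal F$, and the inequality $\dot V\le-c\sqrt V$ (your $\tfrac{d}{dt}\sqrt{2V}\le-c$), which gives finite-time entry. Your check that $-\nabla\mathcal L_N$ has nonnegative components, so the Nash term only helps, is correct and goes beyond what the paper verifies.

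The step that fails is ``collecting the pieces.'' In reverse time the drift is $-\mathbf f+\tfrac12 g^2(t)\nabla\log p_t-\tfrac12 g^2(t)\nabla\mathcal L_{\text{guide}}$. So $\mathbf f$ carries no factor $\tfrac12 g^2(t)$, and the bound you can actually derive is
\[
\langle\dot{\mathbf u},\mathbf n\rangle\;\ge\;\tfrac12 g^2(t)\bigl(\sigma(\delta)\beta-M-2\gamma\|\mathbf d_+\|\bigr)-\sup\|\mathbf f\|,
\]
where $\mathbf d_+=\max(\mathbf d,\mathbf 0)$. It is not $\tfrac12 g^2(t)(\kappa\beta-M-\sup\|\mathbf f\|)$. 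Positivity of $c$ therefore needs $\inf_t\tfrac12 g^2(t)\bigl(\sigma(\delta)\beta-M-2\gamma\|\mathbf d_+\|\bigr)>\sup\|\mathbf f\|$. The lemma's hypothesis $\beta>M+\sup\|\mathbf f\|$ does not imply this for a general drift, for three reasons:
\begin{itemize}
\item $\tfrac12 g^2(t)$ multiplies $\beta$ and $M$ but not $\sup\|\mathbf f\|$;
\item $\sigma(\delta)<1$ multiplies only $\beta$ (about $0.51$ for the paper's $\delta=0.05$);
\item $\gamma$ is absent from the hypothesis, and nothing in the statement sets $\gamma=0$.
\end{itemize}
A lower bound on $g^2$ makes $c$ well defined but does not make it positive. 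Taking $\delta$ large or ``absorbing the constant into $\beta$'' changes the hypothesis rather than proving the stated lemma. What your argument actually establishes is the lemma under the strengthened condition above, with entry by reverse time $\operatorname{dist}(\mathbf u_T,\mathcal F)/c$.

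The paper's proof does not supply the missing idea. It writes $\nabla\mathcal L_{\text{guide}}\approx-\beta\sum_{u_i<d_i}\mathbf e_i$ and asserts $\dot V\le-c\sqrt V$ without tracking $g^2$, $\sigma(\delta)$ or $\gamma$. The constant mismatch you ran into is therefore a defect of the statement, concealed by the paper's ``$\approx$''.

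There are also smaller corrections to your sketch.
\begin{itemize}
\item The frontier contribution $-\gamma g^2(t)\sigma(\|\mathbf u\|^2-r^2)\mathbf u$ opposes $\mathbf n$ on violated coordinates with $0<u_i<d_i$, not on those with $u_i<0$ (there it helps). Since $u_i<d_i$ on these coordinates, its component along $\mathbf n$ is at least $-\gamma g^2(t)\|\mathbf d_+\|$, which is the term in the display.
\item The $\sqrt n$ loss you worry about does not occur. Because $\mathbf n\ge 0$ and $\|\mathbf n\|_2=1$, we have $\|\mathbf n\|_1\ge 1$, so the IR push along $\mathbf n$ is at least $\tfrac12 g^2(t)\sigma(\delta)\beta$. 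Cauchy--Schwarz bounds the score and drift terms with no dimensional factor.
\item The reverse flow runs over a horizon of length $T$, so ``finite time'' is only meaningful if $\operatorname{dist}(\mathbf u_T,\mathcal F)\le cT$. Neither your argument nor the paper's checks this.
\end{itemize}
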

\begin{proof}
Define the Lyapunov function
$V(\mathbf{u}) = \tfrac{1}{2}\sum_{i=1}^{n}
\big(\max(0,\, d_i - u_i)\big)^2$.
Outside $\mathcal{F}$, the IR gradient of $\mathcal{L}_{\text{guide}}$
dominates the corrective update. Under the probability flow ODE (Eq.~\ref{eq:ode}), because the penalty weight $\beta$ strictly bounds both the maximal score magnitude $M$ and the drift term $\mathbf{f}(\mathbf{u}, t)$, the guidance term dominates the trajectory:
$\nabla_{\mathbf{u}} \mathcal{L}_{\text{guide}} \approx
-\beta\sum_{\{i:\,u_i < d_i\}} \mathbf{e}_i$.
It follows that the continuous-time derivative satisfies $\dot{V}(\mathbf{u}) \leq -c\sqrt{V(\mathbf{u})}$ for
some constant $c > 0$. By Lyapunov's direct method, the trajectory reaches $V = 0$ (and thus enters $\mathcal{F}$) in finite time.
\end{proof}

\begin{theorem}[Asymptotic Convergence to the NBS]
Given $\mathbf{u}_t \in \mathcal{F}$, and assuming the learned score perfectly matches the true marginal score ($\nabla_{\mathbf{u}} \log p_t(\mathbf{u}_t) = s_\theta(\mathbf{u}_t, t)$), as the noise schedule $g(t) \to 0$ and $\alpha \to 1$, the continuous-time dynamics converge to the unique NBS $\mathbf{u}^*_{\text{NBS}} \in \partial\mathcal{M}^+$.
\end{theorem}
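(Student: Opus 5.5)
The approach is a Lyapunov argument inside the IR region, mirroring the argument for the Finite-Time Convergence Lemma. The natural Lyapunov candidate is the guidance loss itself, restricted to $\mathcal{F}$. By the Finite-Time Convergence Lemma the trajectory is already in $\mathcal{F}$, and it stays there because the IR term keeps the vector field pointing inward on $\partial\mathcal{F}$.

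\textbf{Step 1: forward invariance of $\mathcal{F}$.} On the face $u_i = d_i$ the $i$-th component of $-\nabla\mathcal{L}_{\text{guide}}$ is at least $\alpha/(\log 2+\epsilon)\cdot\tfrac12 + \beta\,\sigma(\delta)$ minus a bounded frontier contribution. Under the hypothesis of the Lemma this exceeds the score bound $M$ plus $\sup\|\mathbf{f}\|$. Hence the velocity points into $\mathcal{F}$, and Nagumo's theorem gives invariance.

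\textbf{Step 2: the limiting vector field.} As $g(t)\to 0$ I would reparametrize time by $s=\int \tfrac12 g^2(t)\,dt$, interpreting the schedule as annealed. In the regime where $\mathbf{f}$ is absorbed or vanishes, the dynamics become a gradient-type flow
\[
\dot{\mathbf{u}} = \nabla\log p_t(\mathbf{u}) - \nabla\mathcal{L}_{\text{guide}}(\mathbf{u}).
\]
With perfect score matching and $t\to 0$, $\log p_t$ concentrates on $\mathcal{M}$, so its role reduces to confining $\mathbf{u}$ to $\mathcal{M}$ (Assumption 1). I would make this precise by treating $-\log p_0$ as a barrier or indicator of $\mathcal{M}$, so that the limiting flow is the projected gradient flow of $\mathcal{L}_{\text{guide}}$ on the convex set $\mathcal{M}\cap\mathcal{F}$.

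\textbf{Step 3: identify the minimizer.} Inside $\mathcal{F}$ with $u_i-d_i$ bounded away from $0$, we have $\operatorname{softplus}(u_i-d_i)\approx u_i-d_i$ up to an $O(e^{-(u_i-d_i)})$ error. Also $\mathcal{L}_{IR}=O(e^{-\delta})$ and $\mathcal{L}_F$ vanishes inside the radius. So with $\alpha\to1$ the objective is, up to vanishing perturbations, $-\sum_i\log(u_i-d_i)=-\log\mathcal{N}(\mathbf{u})$. By Assumption 3 this is strictly convex on $\mathcal{M}\cap\mathcal{F}$, so it has a unique minimizer. Monotonicity of $\mathcal{N}$ in each coordinate places that minimizer on $\partial\mathcal{M}^+$, and by definition it is $\mathbf{u}^*_{\text{NBS}}$. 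Strict convexity of the limiting objective then gives convergence of the projected gradient flow to this unique minimizer, using $V=\mathcal{L}_{\text{guide}}-\mathcal{L}_{\text{guide}}(\mathbf{u}^*)$ with $\dot V\le -\|\text{projected gradient}\|^2$ and LaSalle's principle.

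\textbf{Step 4: stability under the perturbations.} The softplus, $\epsilon$, $\delta$ and $\gamma$ terms perturb the objective. Stability of strict minimizers under small $C^1$ perturbations, combined with a limit as those parameters shrink, transfers the convergence to the NBS.

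\textbf{Main obstacle.} The hardest step is Step 2: rigorously passing $g(t)\to0$ and $t\to0$ together, so that the score term acts as a constraint to $\mathcal{M}$ rather than as a competing force. As $t\to 0$ the score is unbounded near $\partial\mathcal{M}$, which clashes with Assumption 2. I would first try a two-timescale argument: treat the score as fast dynamics projecting onto $\mathcal{M}$ (a Tikhonov/singular perturbation argument) and the guidance as slow dynamics on $\mathcal{M}$. This likely requires strengthening the hypotheses so that the score vanishes on the interior of $\mathcal{M}$ and acts as an inward normal field on its boundary.
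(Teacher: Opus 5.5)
Your outline follows the paper's own proof step for step. The paper also replaces the guidance gradient inside $\mathcal{F}$ by $-\alpha\nabla\log\mathcal{N}$ (``the IR penalty evaluates to zero''). It asserts that the matched score ``projects the state onto the tangent space of $\mathcal{M}$''. It then concludes from strict log-concavity on the convex set $\mathcal{M}\cap\mathcal{F}$. So the paper supplies no ingredient you are missing; it simply asserts the two reductions.

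As a proof, however, your proposal has a genuine gap in Step 3, separate from the Step 2 obstacle you already flag. For the loss actually defined in Eq.~(\ref{eq:lguid}), $\mathcal{L}_{\text{guide}}$ is \emph{not} $-\log\mathcal{N}$ up to vanishing perturbations on $\mathcal{F}$.
\begin{itemize}
\item On the paper's utility domain ($[0,1]^2$, $r=1$) the margin $x=u_i-d_i$ lies in $(0,1]$. There $\operatorname{softplus}(x)-x=\log(1+e^{-x})$ lies between $0.31$ and $0.69$, so the error is not small.
\item The per-coordinate Nash weight $\sigma(x)/(\operatorname{softplus}(x)+\epsilon)$ stays roughly between $0.56$ and $0.72$. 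The stationarity condition on the frontier therefore uses weights within a factor of about $1.3$ of each other. The NBS condition uses $1/(u_i-d_i)$, which can differ arbitrarily between agents.
\item $\mathcal{L}_F$ does not vanish inside the ball. Its gradient $2\gamma\,\sigma(\|\mathbf{u}\|^2-r^2)\,\mathbf{u}$ has size about $\gamma\|\mathbf{u}\|$ near the sphere. So $\mathcal{L}_{\text{guide}}$ is not coordinatewise decreasing, and your monotonicity argument for landing on $\partial\mathcal{M}^+$ does not transfer.
\item $\mathcal{L}_{IR}=\beta\sum_i\log(1+e^{-(u_i-d_i-\delta)})$ is not $O(e^{-\delta})$. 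Its gradient $\beta\,\sigma(d_i-u_i+\delta)$ is at least about $0.27\beta$ on the unit square.
\end{itemize}
The theorem's limits remove none of this. $g\to0$ does not touch the loss, and $\alpha\to1$ is irrelevant because only the ratios $\alpha/\beta$ and $\alpha/\gamma$ matter. Step 4 therefore has no small parameter to work with. Worse, Step 1 needs $\beta$ large, which makes $\mathcal{L}_{IR}$ the dominant force inside $\mathcal{F}$. Your projected flow would converge, if at all, to the minimizer of the full $\mathcal{L}_{\text{guide}}$ over $\mathcal{M}\cap\mathcal{F}$, which in general is not $\mathbf{u}^*_{\text{NBS}}$. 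Separately, Step 1 does not follow from the Lemma's hypothesis: since $\sigma(\delta)\approx\tfrac12$, the inward component can fall below $M+\sup\|\mathbf{f}\|$ even when $\beta>M+\sup\|\mathbf{f}\|$.

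Step 2 is also more than a technicality.
\begin{itemize}
\item Any rest point of your limiting flow satisfies $\nabla\log p_t=\nabla\mathcal{L}_{\text{guide}}$. A perfectly matched score therefore tilts the limit toward the data's high-density region; for the human corpora the paper reports the learned distribution concentrating near the origin. This is avoided only if $p_0$ is flat on $\mathcal{M}$ and its score reduces to an inward normal-cone term on $\partial\mathcal{M}$.
\item Assumption~2 bounds the score by $M$, so $-\log p_0$ cannot act as the indicator or barrier of $\mathcal{M}$ that you need.
\item After your reparametrization $s=\int\tfrac12 g^2\,dt$, LaSalle requires $s\to\infty$. This fails whenever $\int g^2\,dt<\infty$ over the remaining reverse horizon.
\end{itemize}
To obtain a provable statement you must make explicit the idealizations the paper uses silently:
\begin{itemize}
\item hard IR with an exact Nash term, i.e.\ $\mathcal{L}_{\text{guide}}=-\alpha\log\mathcal{N}$ on $\mathcal{F}$ and no active frontier penalty on $\mathcal{M}$;
\item a score that acts only as the normal cone of $\mathcal{M}$;
\item an annealing schedule with $\int g^2\,dt=\infty$.
\end{itemize}
Under those hypotheses your Steps 3--4 do go through: a projected gradient flow of a strictly convex function on a compact convex set, with $V=\mathcal{L}-\mathcal{L}(\mathbf{u}^*)$ as Lyapunov function. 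That argument would be more rigorous than the paper's sketch.
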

\begin{proof}
Inside the feasible region $\mathcal{F}$, the IR penalty evaluates to zero, and the Nash term dominates the guidance gradient:
$\nabla_{\mathbf{u}} \mathcal{L}_{\text{guide}} =
-\alpha\nabla_{\mathbf{u}} \log \mathcal{N}(\mathbf{u})$.
Under the assumption of perfect score matching, the unconditional score smoothly projects the state onto the tangent space of the data manifold $\mathcal{M}$, while the guidance gradient pushes the state toward higher Nash product values along this manifold. By Assumptions~1 and~3, maximizing a strictly log-concave function over a convex set yields a unique global maximum. Therefore, as the noise variance collapses ($t \to 0$), the state converges deterministically:
$\lim_{t \to 0} \mathbf{u}_t
= \arg\max_{\mathbf{u} \in \partial\mathcal{M}^+ \cap \mathcal{F}}
\prod_{i=1}^{n}(u_i - d_i)
= \mathbf{u}^*_{\text{NBS}}$.
\end{proof}

The final utility $\hat{\mathbf{u}}$ approximates the NBS relative to the
model's learned distribution $\mathcal{M}_\theta$. In practice, the distance between the generated sample $\hat{\mathbf{u}}$ and the true $\mathbf{u}^*_{\text{NBS}}$ is bounded by the $L_2$ score-matching error of the neural network, providing a stochastically smoothed approximation well-suited to noisy empirical datasets.

\subsection{Practical Inference and Counterfactual Evaluation}
\label{subsec:inference}

At runtime, the utility generation process is initialised with pure Gaussian
noise $\mathbf{u}_T \sim \mathcal{N}(\mathbf{0}, \mathbf{I})$. This noise is
iteratively refined over $S$ DDIM steps, conditioned on the graph embedding
$h$ and steered by $\nabla_{\hat{\mathbf{u}}_0} \mathcal{L}_{\text{guide}}$
in the final $t_{\text{start}}$ fraction of steps. The complete inference
procedure is summarised in Algorithm~\ref{alg:inference}.

\begin{algorithm}[htbp]
\caption{Guided DDIM Inference}
\label{alg:inference}
\begin{algorithmic}[1]
\Require Agent features $\mathbf{x}_1, \ldots, \mathbf{x}_n$; disagreement vector $\mathbf{d}$;
         parameters $\lambda$, $t_{\text{start}}$, $\alpha$, $\beta$,
         $\gamma$, $r$; number of sampling steps $S$
\Ensure  $\hat{\mathbf{u}} \in \mathbb{R}^n$ (equitable utility allocation)
\State $h \leftarrow \text{GATv2}(\mathbf{x}_1, \ldots, \mathbf{x}_n)$
    \Comment{Context embedding $\in \mathbb{R}^{d_h}$}
\State $\mathbf{u}_T \sim \mathcal{N}(\mathbf{0}, \mathbf{I})$
\For{$i = 1, \ldots, S$}
    \State $t \leftarrow \text{timestep corresponding to step } i$
    \State $\hat{\boldsymbol\varepsilon}
           \leftarrow s_\theta(\mathbf{u}_t, t, h)$
    \State $\hat{\mathbf{u}}_0 \leftarrow
           \operatorname{clip}\!\left(
               \dfrac{\mathbf{u}_t
                      - \sqrt{1-\bar\alpha_t}\,\hat{\boldsymbol\varepsilon}}
                     {\sqrt{\bar\alpha_t}},\;
               0,\; c_{\max}
           \right)$
    \If{$t/T < t_{\text{start}}$}
        \Comment{Guidance window}
        \State $w \leftarrow \lambda\sqrt{\bar\alpha_t}$
        \State $g \leftarrow
               \nabla_{\hat{\mathbf{u}}_0}
               \mathcal{L}_{\text{guide}}(\hat{\mathbf{u}}_0, \mathbf{d})$
        \State $\hat{\mathbf{u}}_0 \leftarrow
               \Pi_{\mathcal{F}}\!\left(
                   \hat{\mathbf{u}}_0
                   - w\,g\,/\,(\|g\| + \epsilon)
               \right)$
    \EndIf
    \If{$i = S$}
        \State \Return $\hat{\mathbf{u}}_0$
            \Comment{Terminal step: return clean estimate directly}
    \EndIf
    \State $\mathbf{u}_{t-1} \leftarrow
           \sqrt{\bar\alpha_{t-1}}\,\hat{\mathbf{u}}_0
           + \sqrt{1-\bar\alpha_{t-1}}\,\hat{\boldsymbol\varepsilon}$
    \State $\mathbf{u}_{t-1} \leftarrow
           \operatorname{clip}(\mathbf{u}_{t-1},\;
           -c_{\text{drift}},\; c_{\text{drift}})$
           \Comment{Prevent cumulative drift}
\EndFor
\end{algorithmic}
\end{algorithm}

Beyond generating a single outcome, the stochastic initialisation supports
ensemble-based evaluation and diversity quantification by sampling multiple
trajectories from the same context. Furthermore, the generative architecture
naturally enables counterfactual fairness analysis: targeted interventions on
the input features, such as equalising the disagreement points $d_i$ of
asymmetrically situated agents, allow quantitative evaluation of the equity
impact of structural imbalances, without requiring access to counterfactual
ground-truth outcomes.

\section{Experimental Setup }
\label{sec:experimental_setup}

To empirically validate the guided graph diffusion framework for multi-issue bilateral negotiations, the experimental pipeline is designed as a coherent, sequential workflow. It begins with carefully curated data infrastructure, proceeds through model architecture instantiation and phased training, incorporates systematic hyperparameter optimization, and concludes with a multi-dimensional evaluation protocol. Every negotiation instance is normalized to a strictly bilateral continuous utility space $\mathbf{u} \in [0,1]^2$, which enables consistent comparison between theoretically optimal bargaining outcomes and both synthetic baselines and human-generated behavior.

\subsection{Datasets}
\label{subsec:datasets}

The empirical foundation rests on three complementary datasets that together span analytical exactness and real-world complexity (see Table~\ref{tab:datasets} for a summary of their structural properties and splits). The synthetic NTU dataset comprises $25{,}000$ dyads featuring strictly concave Pareto frontiers and asymmetrically sampled disagreement points drawn from $\mathcal{U}(0.05, 0.4)$. Because the disagreement points and frontier shapes are fully known,  NBS can be computed numerically using Sequential Least Squares Programming (SLSQP)~\cite{sw36348}, providing an unambiguous analytical reference. 

To assess how well the model generalizes to human bargaining behavior, two established real-world corpora are included. The CaSiNo corpus captures human-to-human multi-issue negotiations over campsite items (food, water, firewood) in which participants hold heterogeneous but publicly revealed priority weights. The Deal-or-No-Deal corpus, by contrast, features hidden private valuations over splits of books, hats, and balls, introducing realistic information asymmetry. In both human datasets, utilities are normalized to the unit square and no-deal outcomes are excluded to focus on reached agreements. Across all three datasets, each agent is represented by the same compact 3-dimensional feature vector $[d_i, \text{normalized budget}, \text{priority weight}]$, ensuring architectural uniformity while respecting domain-specific preference structures.

\begin{table}[htbp]
\centering
\caption{Dataset Infrastructure and Structural Properties}
\label{tab:datasets}
\renewcommand{\arraystretch}{1.3}
\resizebox{\textwidth}{!}{%
\begin{tabular}{llccp{6.5cm}}
\toprule
\textbf{Dataset}
  & \textbf{Split}
  & \textbf{Disagreement $d_i$}
  & \textbf{Feat.\ dim $k$}
  & \textbf{Structural Characteristics} \\
\midrule
Synthetic NTU
  & 20k / 2.5k / 2.5k
  & $\mathcal{U}(0.05,\,0.4)$
  & 3
  & $25{,}000$ dyads with strictly concave NTU Pareto frontiers;
    disagreement points sampled asymmetrically per agent;
    exact NBS computable via SLSQP. \\
CaSiNo corpus \cite{chawla-etal-2021-casino}
  & 80\% / 10\% / 10\%
  & $--$
  & 3
  & Human-to-human multi-issue bargaining over campsite items
    (food, water, firewood) with varying priority weights;
    utilities normalized to $[0,1]^2$. \\
Deal or No Deal \cite{lewis2017dealornodeal}
  & 80\% / 10\% / 10\%
  & $--$
  & 3
  & Agent negotiations over item splits (books, hats, balls)
    with hidden private valuations; no-deal outcomes excluded. \\
\bottomrule
\multicolumn{5}{l}{\footnotesize
  Feature vector $\mathbf{x}_i \in \mathbb{R}^k$:
  $[d_i,\; \text{normalized budget},\; \text{priority weight}]$
  for all three datasets.}
\end{tabular}%
}
\end{table}

\subsection{Model Architecture and Training Configuration}
\label{subsec:architecture}

The model architecture strategically decouples relational context encoding from the generative denoising process. All fixed architectural choices and numerical bounds are reported in Table~\ref{tab:architecture}. A single GATv2 layer with four attention heads processes a directed two-node bargaining graph, after which global mean-pooling yields a fixed 64-dimensional context vector that conditions the entire reverse diffusion trajectory. Given the minimal structural complexity of bilateral negotiations, deep graph message passing provides limited benefit; therefore, the noise-prediction network $s_\theta$ is implemented as a time-conditioned MLP. This MLP directly concatenates the noisy utility state $\mathbf{u}_t \in \mathbb{R}^2$, a 32-dimensional sinusoidal time embedding, and the pooled context vector into a 98-dimensional input, which is then processed through two hidden layers of width 256 using LayerNorm and SiLU activations. 

Training unfolds over 30 epochs in two distinct phases. During the first 15 epochs (Phase 1), the model learns the empirical data manifold via standard denoising score matching with a linear variance schedule ($\beta_t$ from $10^{-4}$ to 0.02) over 1000 forward timesteps and AdamW optimization with cosine annealing. In the second phase (epochs 15--29), the normative guidance loss $\mathcal{L}_{\text{guide}}$ is gradually introduced. To avoid destabilization, the individual rationality penalty coefficient $\beta$ is linearly annealed from 10 to 50, progressively strengthening adherence to game-theoretic axioms without overwhelming early training dynamics.

\begin{table}[htbp]
\centering
\caption{Fixed architectural hyperparameters and numerical bounds.}
\label{tab:architecture}
\small
\renewcommand{\arraystretch}{1.15}
\begin{tabular}{lc | lc}
\toprule
\textbf{Parameter} & \textbf{Value} & \textbf{Parameter} & \textbf{Value} \\
\midrule
\multicolumn{4}{c}{\textit{Architecture \& Diffusion Schedule}} \\
\midrule
Agents ($n$) & 2 & Total Timesteps ($T$) & 1000 \\
GATv2 Heads ($K$) & 4 & DDIM Steps ($S$) & 50 \\
Graph Emb. ($d_h$) & 64 & Initial Var. ($\beta_1$) & $10^{-4}$ \\
Time Emb. ($d_t$) & 32 & Final Var. ($\beta_T$) & 0.02 \\
MLP Width ($d_{\text{mlp}}$) & 256 & Input Dim. ($d_z$) & 98 \\
\midrule
\multicolumn{4}{c}{\textit{Numerical Bounds \& Training Configuration}} \\
\midrule
Clean Bound ($c_{\max}$) & 1.2 & Stability Const. ($\epsilon$) & $10^{-6}$ \\
Drift Clamp ($c_{\text{drift}}$) & 1.5 & IR Margin ($\delta$) & 0.05 \\
Output Scale ($c_{\text{out}}$) & 0.1 & Pareto Radius ($r$) & 1.0 \\
Batch Size & 256 & Total Epochs & 30 \\
Optimizer & AdamW & Phase 1 / Phase 2 & 15 / 15 epochs \\
\bottomrule
\end{tabular}
\end{table}

\subsection{Hyperparameter Search}
\label{subsec:search}

Because the impact of classifier-free-style guidance is known to be highly sensitive to scale, timing, and relative weighting, a comprehensive multi-dimensional grid search was performed over the key normative guidance hyperparameters rather than adopting arbitrary fixed values. The complete search space, including guidance strength $\lambda$, activation timestep $t_{\text{start}}$, Nash-product multiplier $\alpha$, IR penalty $\beta$, frontier penalty $\gamma$, and number of DDIM steps $S$, is detailed in Table~\ref{tab:search_space}. An extensive grid of configurations was evaluated on every test split using the full suite of metrics, and the best-performing parameter set for each dataset was selected via a composite objective that balances Nash efficiency, individual rationality compliance, and frontier proximity.

\begin{table}[htbp]
\centering
\caption{Multidimensional Search Space for Normative Guidance Parameters}
\label{tab:search_space}
\renewcommand{\arraystretch}{1.2}
\small
\begin{tabular}{llcc}
\toprule
\textbf{Parameter}
  & \textbf{Symbol}
  & \textbf{Search Domain}
  & \textbf{Grid Steps} \\
\midrule
Guidance step size & $\lambda$ & $[0.005,\; 0.35]$ & 10 \\
Guidance activation & $t_{\text{start}}$ & $[0.10,\; 0.70]$ & 8 \\
Nash multiplier & $\alpha$ & $[5.0,\; 300.0]$ & 10 \\
IR penalty & $\beta$ & $[0.5,\; 50.0]$ & 10 \\
Frontier penalty & $\gamma$ & $[0.5,\; 80.0]$ & 10 \\
DDIM inference steps & $S$ & $\{10,\; 15,\; 25,\; 50\}$ & 4 \\
\bottomrule
\end{tabular}
\end{table}

\subsection{Evaluation Metrics}
\label{subsec:metrics}

System performance is quantified through four complementary metrics that jointly probe normative validity, efficiency, and geometric fidelity. IR compliance measures the expected fraction of generated samples that strictly dominate the disagreement point for both agents simultaneously: 

$$
    \text{IR Compliance} = \mathbb{E}[\mathbf{1}[u_i \geq d_i \;\forall i]] 
$$

The Nash product serves as the primary proxy for joint surplus and fairness, computed as $\mathbb{E}[\prod_i (u_i - d_i)]$. 

Nash efficiency normalizes the achieved Nash product against the exact SLSQP-computed NBS of each test instance:

$$
    \text{Nash Efficiency} = \frac{\mathbb{E}[\text{Generated Nash Product}]}{\mathbb{E}[\text{NBS Nash Product}]}
$$

enabling fair cross-dataset comparison despite differing disagreement geometries. 

Finally, mean $L_2$ frontier distance $(\mathcal{D}_F)$ quantifies geometric accuracy by measuring the average Euclidean projection distance from generated points to the theoretical Pareto frontier arc (independent of the Nash objective):

$$
    \mathcal{D}_F = \mathbb{E}\!\left[ \left\| \mathbf{u} - \frac{r\,\mathbf{u}}{\|\mathbf{u}\|} \right\|_2 \right]
$$

To assess the statistical significance of performance improvements attributable to normative guidance, Wilcoxon signed-rank tests are conducted on the per-instance Nash-product distributions derived from guided versus unguided models. This test constitutes a robust, non-parametric procedure that is well suited to the characteristically skewed distributions observed for optimal bargaining outcomes.

Taken together, this experimental framework, grounded in the dataset specifications reported in Table~\ref{tab:datasets}, instantiated through the architectural and training configurations in Table~\ref{tab:architecture}, systematically tuned via the sensitivity analysis summarized in Table~\ref{tab:search_space}, and evaluated using a comprehensive and balanced suite of performance metrics—provides a methodologically rigorous framework for linking axiomatic bargaining theory and the generative capabilities of guided diffusion models in realistic, multi-issue negotiation environments.

\section{Results and Analysis}
\label{sec:results}

The efficacy of the normatively guided graph diffusion framework is evaluated
in three phases. First, the sensitivity of the guidance hyperparameters is
analyzed to establish optimal configurations for each domain. Second,
quantitative improvements in axiomatic compliance and joint surplus
maximization achieved by the guided model over the unguided baseline are
assessed. Finally, the latent trajectory dynamics are examined to understand
how continuous gradient steering constrains the diffusion process to satisfy
strict game-theoretic constraints.

\subsection{Sensitivity Analysis and Hyperparameter Optimization}
\label{subsec:sensitivity}

To ensure stable convergence of the guided diffusion chain while jointly
enforcing IR and NBS alignment, a multidimensional
sensitivity analysis was conducted across six guidance hyperparameters: the
normalized step size $\lambda$, the guidance activation threshold $t_{\text{start}}$,
the Nash multiplier $\alpha$, the IR penalty $\beta$, the frontier penalty
$\gamma$, and the number of DDIM inference steps $S$. The explicit correspondence 
between these mathematical symbols and the programmatic variable names displayed 
in the subsequent sensitivity visualizations is documented in Table~\ref{tab:notation}. A formal sensitivity score was computed for each parameter, defined as the range of the evaluation metric divided by its mean, averaged across all four target metrics (IR Compliance $\uparrow$, Nash Product $\uparrow$, Nash Efficiency $\uparrow$, Frontier Distance $\downarrow$).

\begin{figure}[h!]
    \centering
    \includegraphics[width=\textwidth]{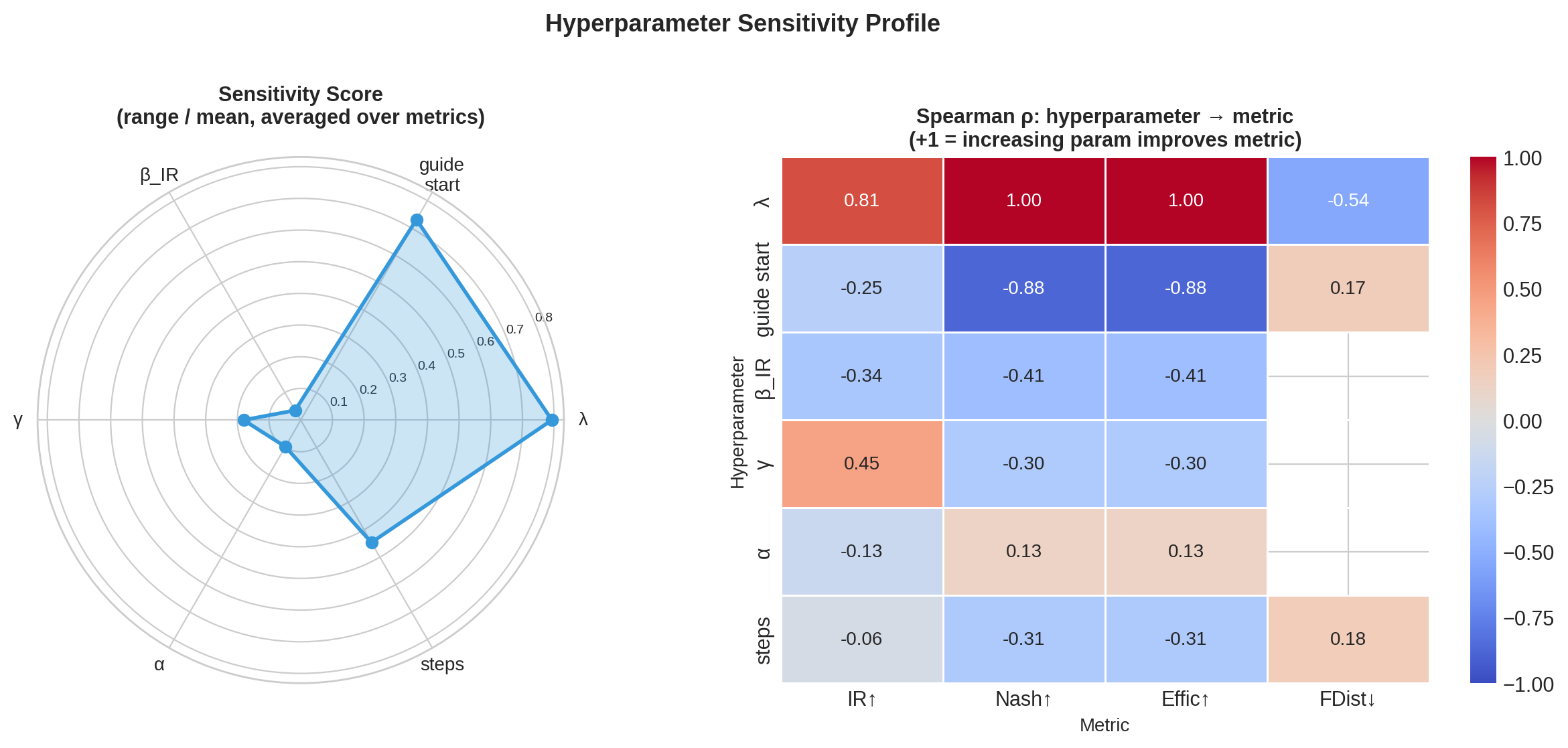}
    \caption{%
        Hyperparameter sensitivity profile for the Synthetic NTU dataset.
        \textbf{Left:} Spider chart ranking parameters by sensitivity score
        (range/mean averaged over four metrics), identifying $t_{\text{start}}$
        and $\lambda$ as the dominant factors (${\approx}0.80$ and
        ${\approx}0.70$ respectively), while $\alpha$ and $\gamma$ are
        effectively saturated (${\leq}0.20$).
        \textbf{Right:} Spearman $\rho$ matrix quantifying directionality.
        $\lambda$ exhibits perfect monotone correlation with Nash Product
        ($\rho=+1.00$) and Nash Efficiency ($\rho=+1.00$), while
        $t_{\text{start}}$ is strongly negatively correlated ($\rho=-0.88$),
        confirming that earlier, stronger guidance is the primary driver of
        Nash-optimal allocation. The $\beta$ row exhibits a counterintuitive
        negative correlation ($\rho \approx -0.34$ with IR), attributable
        to gradient overshooting at excessive penalty magnitudes.
        Symbol--code name correspondence follows Table~\ref{tab:notation}.%
    }
    \label{fig:sensitivity_radar}
\end{figure}

As quantified in Figure~\ref{fig:sensitivity_radar} (left), $t_{\text{start}}$
and $\lambda$ emerge as the dominant parameters, yielding sensitivity scores of
${\approx}0.80$ and ${\approx}0.70$ respectively. Conversely, $\alpha$ exhibits
the lowest sensitivity (${\approx}0.15$), indicating saturation within its
tested range and implying that search effort should be concentrated on $\lambda$
and $t_{\text{start}}$.

The Spearman $\rho$ matrix (Figure~\ref{fig:sensitivity_radar}, right)
clarifies the directionality of these effects. The $\lambda$ row reveals the
strongest signal: perfect monotone correlations with Nash Product and Nash
Efficiency ($\rho = +1.00$), alongside a strong positive correlation with IR
Compliance ($\rho = +0.81$), at a bounded cost to Frontier Distance
($\rho = -0.54$). The $t_{\text{start}}$ row exhibits strong negative
correlations with Nash alignment ($\rho = -0.88$), confirming that delaying
activation substantially reduces the corrective steps available before the
chain terminates. Together, these results establish a clear operational
directive: guidance requires early activation and large step sizes to maximize
normative compliance.

Three secondary effects merit explicit note. First, $\beta$ yields a
counterintuitive negative correlation with IR Compliance 
($\rho \approx -0.34$): excessively large IR penalty magnitudes produce
gradient norms that drive $\hat{\mathbf{u}}_0$ past the IR boundary into
Nash-suboptimal territory. Second, $\alpha$ is near-zero across all metrics
($|\rho| \leq 0.13$), confirming saturation. Third, additional DDIM steps
are negatively correlated with Nash alignment ($\rho = -0.31$), because more
unguided early steps compound latent drift before $\nabla_{\hat{\mathbf{u}}_0}
\mathcal{L}_{\text{guide}}$ can intervene.

\begin{figure}[h!]
    \centering
    \includegraphics[width=\textwidth]{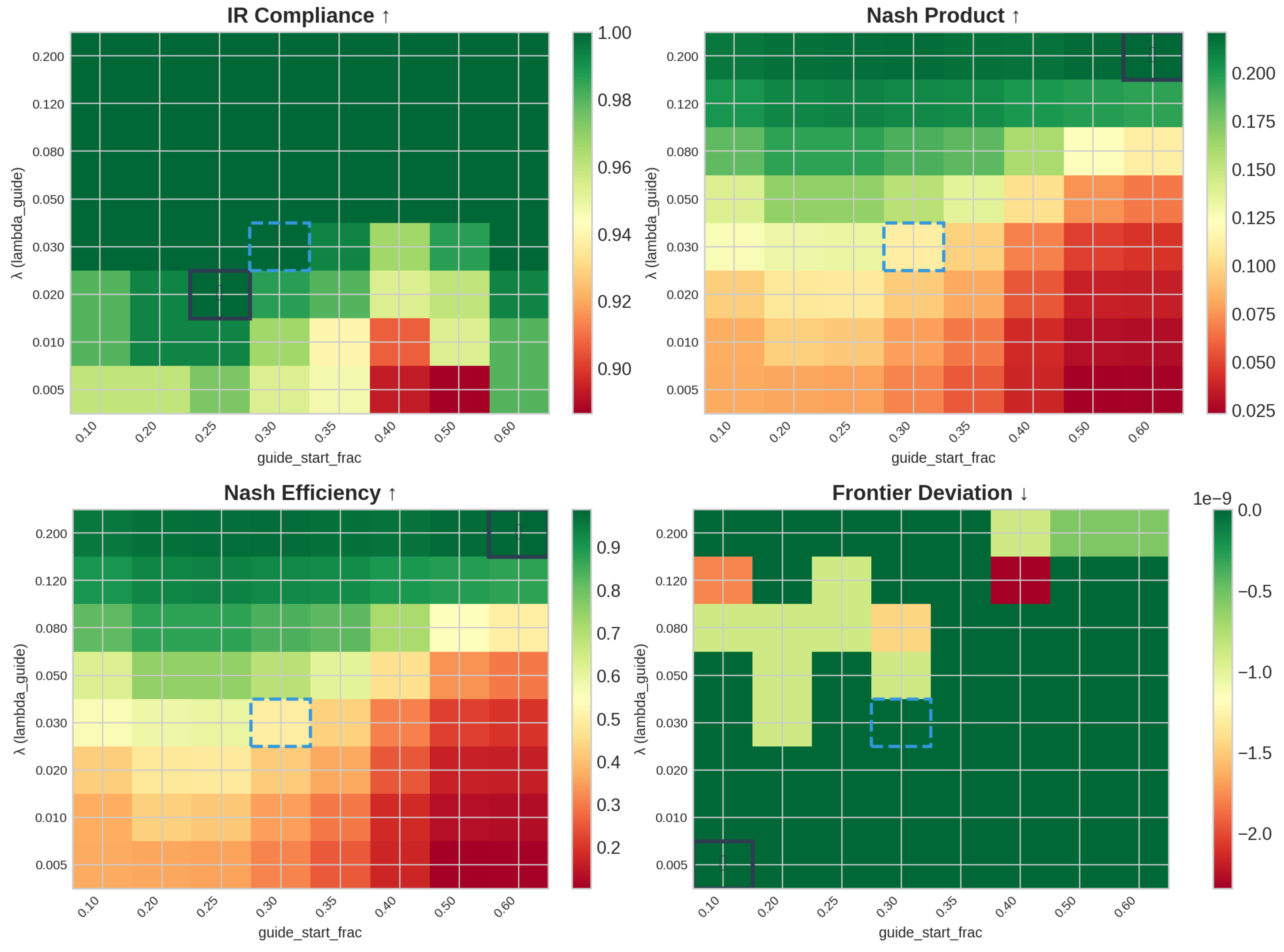}
    \caption{%
        Joint 2-D grid search over $\lambda$ (\texttt{lambda\_guide},
        y-axis) and $t_{\text{start}}$ (\texttt{guide\_start\_frac},
        x-axis) across four evaluation metrics (Synthetic NTU dataset).
        The dashed blue box marks the base configuration
        ($\lambda=0.03$, $t_{\text{start}}=0.30$); the solid box marks
        the composite-optimal configuration. Large $\lambda$ combined
        with early activation maximizes Nash Product and IR Compliance
        but increases Frontier Distance; small $\lambda$ with late
        activation preserves feasibility at the cost of Nash alignment,
        yielding efficiencies as low as $2.5\%$. No single cell
        dominates all four metrics simultaneously, confirming that a
        composite objective is required.%
    }
    \label{fig:sensitivity_heatmap}
\end{figure}

To investigate the interaction between the two dominant parameters, a joint
$8{\times}8$ grid search over $(\lambda,\, t_{\text{start}})$ was conducted
(Figure~\ref{fig:sensitivity_heatmap}). The IR Compliance panel reveals a
broad high-performance region (${\geq}0.98$) in the upper-left quadrant,
degrading monotonically toward the lower-right. Nash Product and Efficiency
show a sharper boundary: only a narrow corridor in the upper-left achieves
high performance, with Nash Efficiency ranging from $2.5\%$ (small $\lambda$,
late $t_{\text{start}}$) to $20\%$ (large $\lambda$, early $t_{\text{start}}$).
The base configuration (dashed box) sits at ${\approx}10\%$ efficiency —
substantially below the achievable optimum. The Frontier Distance panel
exhibits the inverse spatial pattern, exposing the fundamental trade-off: no
single cell simultaneously dominates all four metrics, confirming the necessity
of a composite objective.

The optimal configurations derived from this composite objective are reported
in Table~\ref{tab:optimal_params}. Because $\lambda$ exhibits a perfect
monotone relationship with Nash optimization, the composite objective selects the
upper bound of the search grid ($\lambda=0.35$) across all three datasets, indicating that
$\lambda$ is a domain-invariant parameter. We note that performance did not
saturate at this boundary, suggesting the true optimum may exceed $0.35$.
Consistently fewer DDIM steps ($S = 10$--$15$, versus the default $S = 50$)
are selected, mitigating early-chain noise accumulation before guidance
activates, consistent with $\rho(S,\,\text{Nash}) = -0.31$.

\begin{table}[htbp]
\centering
\caption{Optimal Hyperparameter Configurations Derived via Sensitivity Analysis}
\label{tab:optimal_params}
\resizebox{\textwidth}{!}{%
\renewcommand{\arraystretch}{1.3}
\begin{tabular}{lcccccc}
\toprule
\textbf{Dataset}
  & $\boldsymbol{\lambda}$ \small{(\texttt{lambda\_guide})}
  & $\boldsymbol{t_{\text{start}}}$ \small{(\texttt{guide\_start\_frac})}
  & $\boldsymbol{S}$ \small{(\texttt{steps})}
  & $\boldsymbol{\alpha}$ \small{(\texttt{alpha\_norm})}
  & $\boldsymbol{\beta}$ \small{(\texttt{beta\_ir})}
  & $\boldsymbol{\gamma}$ \small{(\texttt{gamma\_frontier})} \\
\midrule
Synthetic NTU    & $0.35$ & $0.25$ & $15$ & $10.0$  & $8.0$  & $15.0$ \\
CaSiNo Corpus    & $0.35$ & $0.40$ & $10$ & $150.0$ & $20.0$ & $10.0$ \\
Deal or No Deal  & $0.35$ & $0.30$ & $10$ & $10.0$  & $0.5$  & $10.0$ \\
\bottomrule
\multicolumn{7}{l}{\footnotesize
  All symbols defined in Table~\ref{tab:notation};
  search domains in Table~\ref{tab:search_space}.
  Values selected by composite objective:
  $0.40 \cdot \text{IR} + 0.35 \cdot \text{Efficiency}
   + 0.15 \cdot \text{Nash} + 0.10 \cdot (1 - \text{FDist})$.}
\end{tabular}%
}
\end{table}

Domain-specific structural differences govern the remaining parameters. The selection of
$t_{\text{start}}$ ($0.25$ for Synthetic NTU versus $0.40$ for CaSiNo) is
structurally significant: empirical human corpora exhibit higher variance and produce less stable
early estimates $\hat{\mathbf{u}}_0$, requiring delayed activation to avoid
corrupting the latent trajectory before the signal is reliable. The large
values of $\alpha$ ($150.0$) and $\beta$ ($20.0$) selected for CaSiNo reflect
a fundamental property of empirical negotiation data: human negotiators
frequently accept interior-point outcomes far from the Pareto efficient, biasing
the learned distribution toward suboptimal utility regions and demanding
stronger normative correction. Conversely, the low $\beta = 0.5$ for
Deal-or-No-Deal is consistent with that dataset's discrete item-allocation
structure, where utility vectors are naturally bounded away from the
disagreement boundary.

\subsection{Axiomatic Compliance and Efficiency Gains}
\label{subsec:compliance}

Using the optimized configurations from Table~\ref{tab:optimal_params}, the
guided framework is benchmarked against an unguided DDIM baseline — identical
architecture, identical training, guidance disabled at inference. Results are
reported in Table~\ref{tab:performance} and spatial distributions are
visualized in Figure~\ref{fig:utility_allocations}.

\begin{table}[htbp]
\centering
\caption{Quantitative Performance: Unguided vs.\ Guided DDIM}
\label{tab:performance}
\renewcommand{\arraystretch}{1.3}
\resizebox{\textwidth}{!}{%
\begin{tabular}{llcccc}
\toprule
\textbf{Dataset}
  & \textbf{Model}
  & \textbf{IR Compliance $\uparrow$}
  & \textbf{Nash Product $\uparrow$}
  & \textbf{Nash Efficiency $\uparrow$}
  & \textbf{Frontier Dist.\ $\downarrow$} \\
\midrule
\multirow{2}{*}{Synthetic NTU}
  & Unguided        & $0.9390$ & $0.0567$ & $24.29\%$ & $0.0000$ \\
  & \textbf{Guided} & $\mathbf{1.0000}$ & $\mathbf{0.2323}$
                    & $\mathbf{99.45\%}$ & $\mathbf{0.0000}$ \\
\midrule
\multirow{2}{*}{CaSiNo Corpus}
  & Unguided        & $1.0000$ & $0.0202$ & $05.64\%$ & $0.0000$ \\
  & \textbf{Guided} & $\mathbf{1.0000}$ & $\mathbf{0.2712}$
                    & $\mathbf{54.24\%}$ & $\mathbf{0.0000}$ \\
\midrule
\multirow{2}{*}{Deal or No Deal}
  & Unguided        & $1.0000$ & $0.3374$ & $67.47\%$ & $0.0153$ \\
  & \textbf{Guided} & $\mathbf{1.0000}$ & $\mathbf{0.4434}$
                    & $\mathbf{88.67\%}$ & $\mathbf{0.0000}$ \\
\bottomrule
\multicolumn{6}{l}{\footnotesize
  Statistical significance (Wilcoxon signed-rank, Nash Product):
  Synthetic NTU $p = 3.33 \times 10^{-165}$;\;
  CaSiNo $p = 6.91 \times 10^{-149}$;\;
  Deal or No Deal $p = 9.24 \times 10^{-42}$.}
\end{tabular}%
}
\end{table}

\begin{figure}[htbp]
    \centering
    \includegraphics[width=\textwidth]{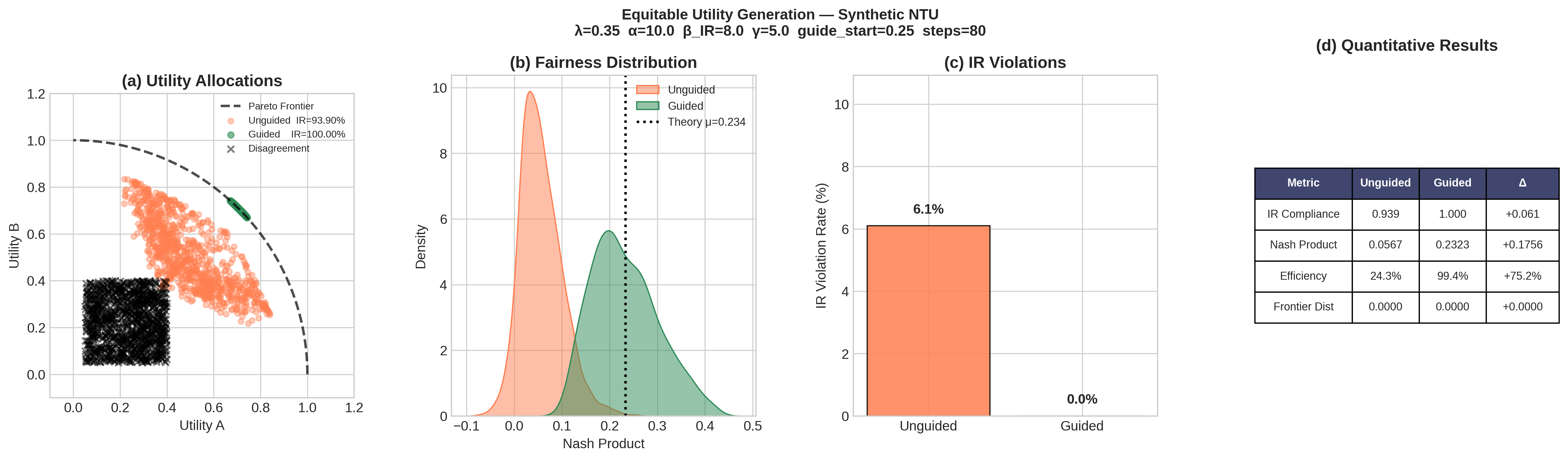} \\
    \vspace{0.4cm}
    \includegraphics[width=\textwidth]{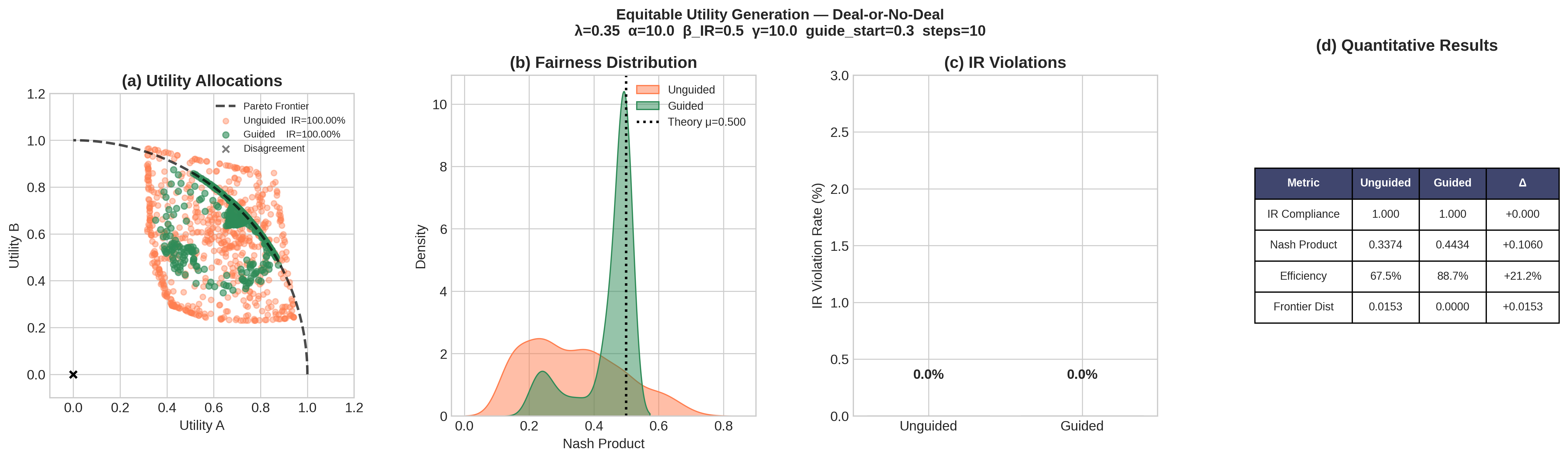} \\
    \vspace{0.4cm}
    \includegraphics[width=\textwidth]{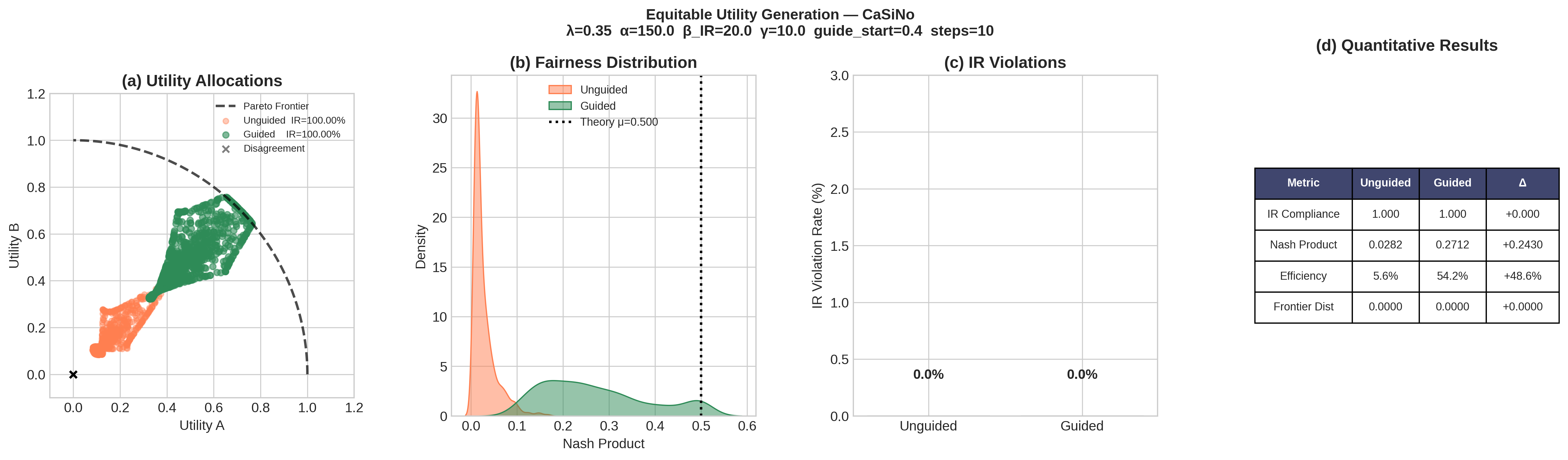}
    \caption{%
        Visual evaluation of generated utility allocations across the three
        negotiation domains under optimized guidance configurations
        (Table~\ref{tab:optimal_params}).
        \textbf{Panel (a):} Spatial distribution of generated utilities in
        $[0,1]^2$; the guided framework (green) actively repels from the
        disagreement points $\mathbf{d}$ (black crosses) while concentrating
        mass near the Pareto frontier arc.
        \textbf{Panel (b):} Nash product density; rightward shift indicates
        improved joint surplus and fairness.
        \textbf{Panels (c) and (d):} IR Compliance and Nash Efficiency
        distributions quantifying the reduction in axiomatic violations and
        recovery of operational optimality relative to the unguided baseline.%
    }
    \label{fig:utility_allocations}
\end{figure}

The spatial distributions in Figure~\ref{fig:utility_allocations}(a) expose
the limitations of the unguided baseline. In the Synthetic NTU domain, the
unguided model exhibits higher variance, with samples crossing the IR
boundary (black crosses), corresponding to the $6.1\%$ IR violation rate in
Table~\ref{tab:performance}. In the human-centric datasets, the unguided
model exhibits mode collapse toward the origin: because empirical training data
reflects cautious, suboptimal human bargaining, the learned distribution
concentrates near $(0, 0)$. Although this conservative clustering avoids IR
violations, it fails to optimize the negotiation — yielding $5.6\%$ Nash
Efficiency on CaSiNo.

Integration of the differentiable guidance loss $\mathcal{L}_{\text{guide}}$
transforms the generative distribution. The guided
allocations (green) repel from $\mathbf{d}$, ensuring $1.000$ IR
Compliance across all datasets. The Nash gradient ($\alpha$ term) stretches
the distribution toward the Pareto arc, escaping the biased empirical prior.
This geometric expansion is captured quantitatively in
Table~\ref{tab:performance}: Synthetic NTU efficiency rises from $24.3\%$ to
$99.4\%$; CaSiNo improves nearly tenfold from $5.6\%$ to $54.2\%$; and
Deal-or-No-Deal advances from $67.5\%$ to $88.7\%$, simultaneously
eliminating the residual frontier distance of $0.0153$ left by the unguided
model.

The significance of these improvements is confirmed by the Wilcoxon
signed-rank test (reported in the footnote of Table~\ref{tab:performance}).
Across all three datasets, $p$-values are highly significant ($p < 0.001$),
formally establishing that the guided and unguided Nash product distributions
are statistically distinct and that the improvement is not attributable to
random variation.

\subsection{Latent Trajectory Dynamics and Stabilization}
\label{subsec:dynamics}

Beyond final-outcome metrics, step-wise analysis of the latent state
$\mathbf{u}_t$ provides mechanistic insight into how the guidance framework
achieves normative compliance. Aggregate trajectory statistics (mean $\pm 1$
standard deviation across 30 test cases) are shown in
Figure~\ref{fig:traj_aggregate}; individual case trajectories are examined in
Figure~\ref{fig:dynamics_synthetic}.

\begin{figure}[h!]
    \centering
    \includegraphics[width=0.95\textwidth]{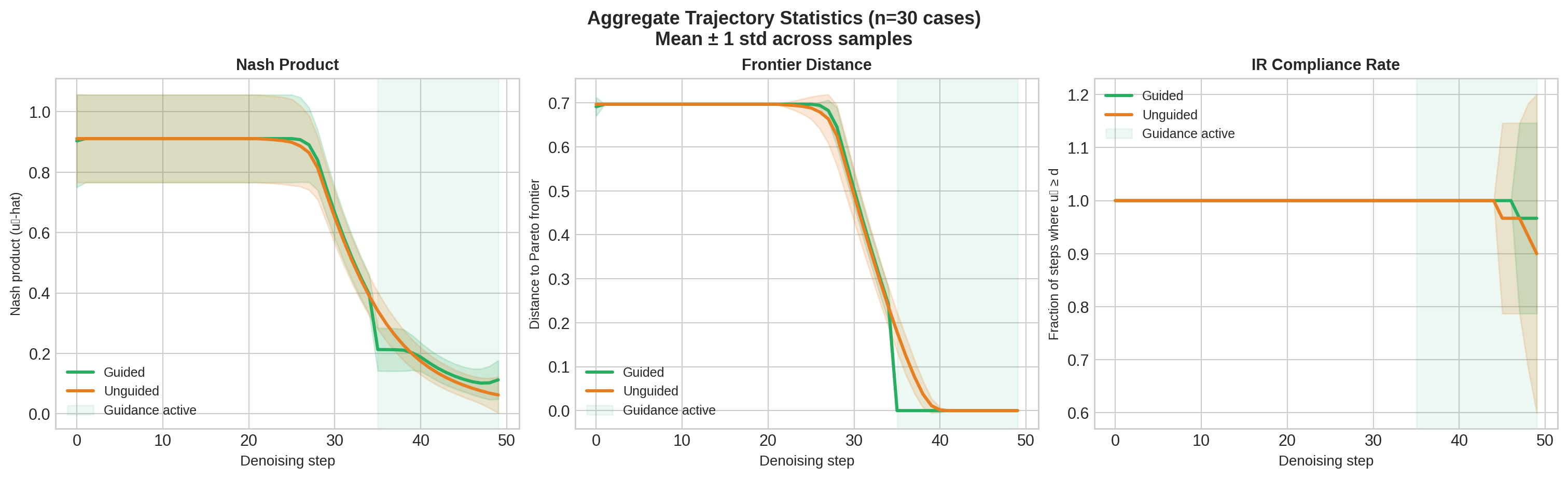}\\
    \vspace{0.3cm}
    \includegraphics[width=\textwidth]{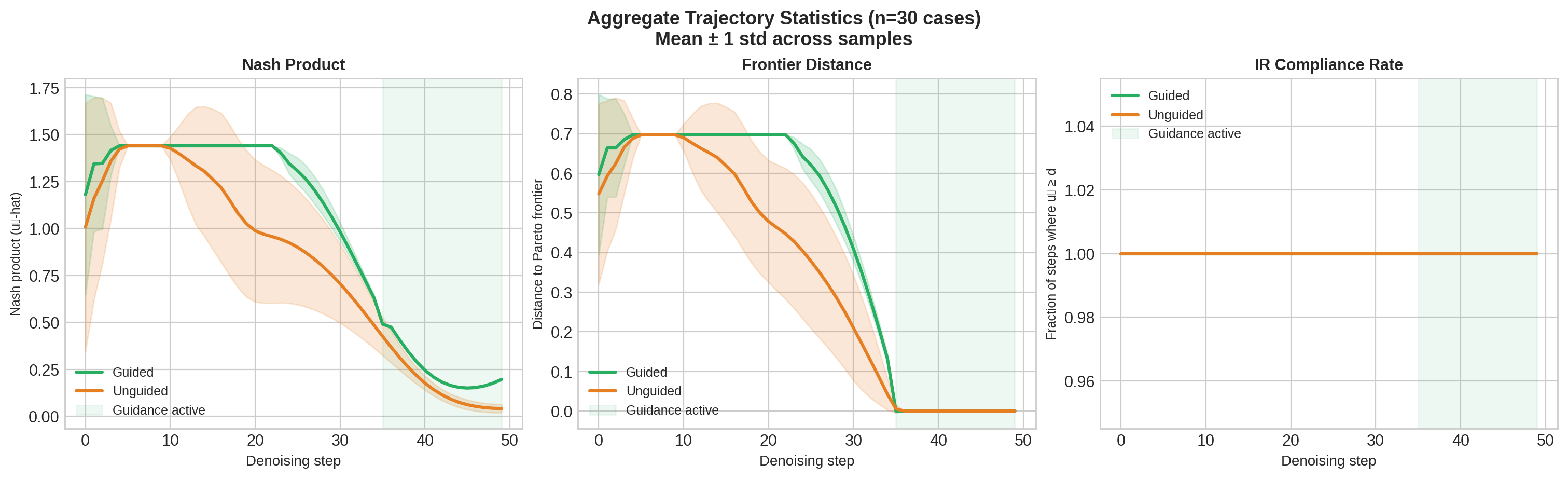}\\
    \vspace{0.3cm}
    \includegraphics[width=\textwidth]{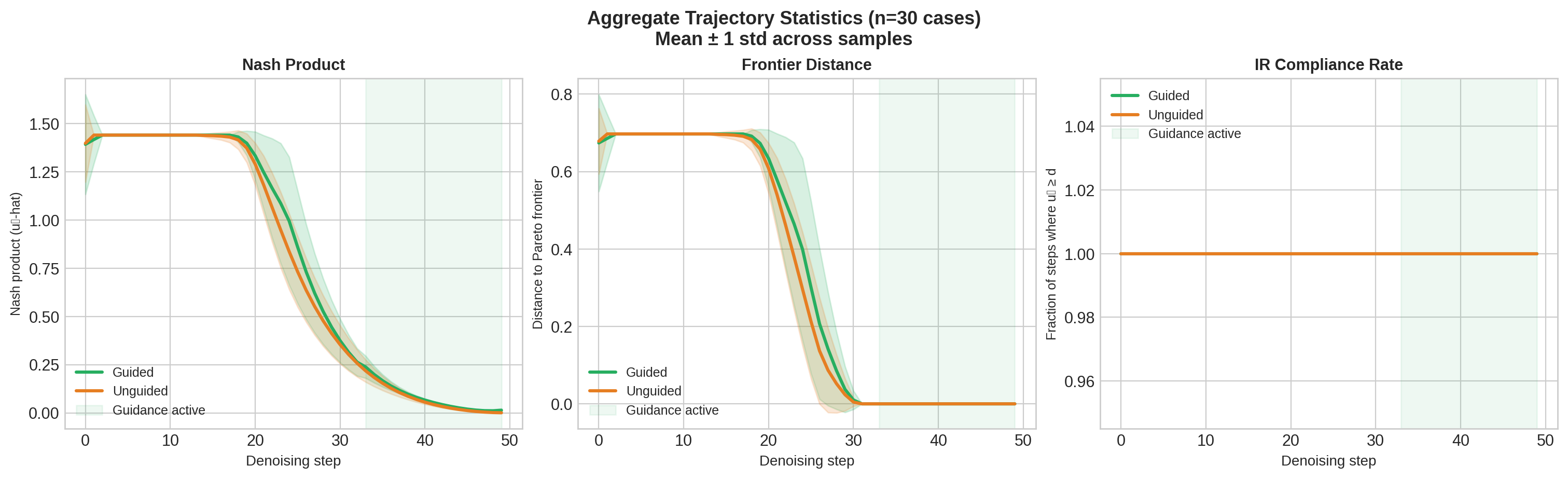}
    \caption{%
        Aggregate trajectory statistics (mean $\pm 1$ std, $n=30$ test cases)
        for guided (green) and unguided (orange) DDIM chains across the three
        negotiation domains. The green shaded band indicates the active
        guidance window ($t/T < t_{\text{start}}$).
        \textbf{Top (Synthetic NTU):} IR Compliance (right panel) shows the
        unguided model suffering from late-stage cumulative drift, while the
        guided chain maintains strict $1.000$ compliance throughout the window.
        \textbf{Middle (Deal or No Deal):} Nash Product (left panel) diverges
        from the unguided baseline upon entering the guidance window, arresting
        joint surplus decay.
        \textbf{Bottom (CaSiNo):} Frontier Distance (center panel) shows
        synchronized convergence to zero for both modes, confirming that the
        MLP denoiser independently places samples near the Pareto arc;
        guidance provides \emph{directional} (Nash-optimal) correction, not
        general feasibility correction.%
    }
    \label{fig:traj_aggregate}
\end{figure}

Without the two stabilization mechanisms described in
Section~\ref{subsec:diffusion}, applying the log-barrier gradient
$\nabla_{\hat{\mathbf{u}}_0} \mathcal{L}_{\text{guide}}$ directly to
the DDIM state induces severe trajectory divergence. As
$\hat{\mathbf{u}}_0$ approaches the IR boundary
($u_i \to d_i$), the Nash term $\mathcal{L}_N$ produces unbounded gradients.
Empirical tracking confirmed that without the soft-clamp
(Fix-B: $\mathbf{u}_t \leftarrow \operatorname{clip}(\mathbf{u}_t,
-c_{\text{drift}}, c_{\text{drift}})$), the latent state accumulates up to
$3.0$ units of negative drift over 22 guidance steps, collapsing IR Compliance
to $0\%$ and generating out-of-distribution utility estimates. Additionally,
returning the DDIM output rather than $\hat{\mathbf{u}}_0^{\text{guided}}$ at
the terminal step (the pre-Fix-A behavior) discards all normative corrections,
since $\mathbf{u}_{\text{out}} = \hat{\boldsymbol\varepsilon}$ when
$\bar\alpha_{\text{next}} = 0$.

With both mechanisms applied, gradient-normalized guidance maintains structural integrity throughout the
reverse process. As observed in the aggregate Nash Product panels
(Figure~\ref{fig:traj_aggregate}, left column), the guided and unguided
$\hat{\mathbf{u}}_0$ estimates follow identical trajectories during the
initial unguided steps — confirming that the base denoiser is undisturbed.
Upon crossing $t/T = t_{\text{start}}$, the gradient norm
$\|\nabla_{\hat{\mathbf{u}}_0} \mathcal{L}_{\text{guide}}\|_2$ activates and
grows monotonically. The latent state smoothly redirects from the biased
empirical mean, halting joint surplus decay and converging to a strictly
higher Nash product. The symmetric soft-clamp mitigates residual numerical
overflow, preventing the compounding drift that would otherwise violate utility
boundary constraints.

\begin{figure}[h!]
    \centering
    \includegraphics[width=\textwidth]{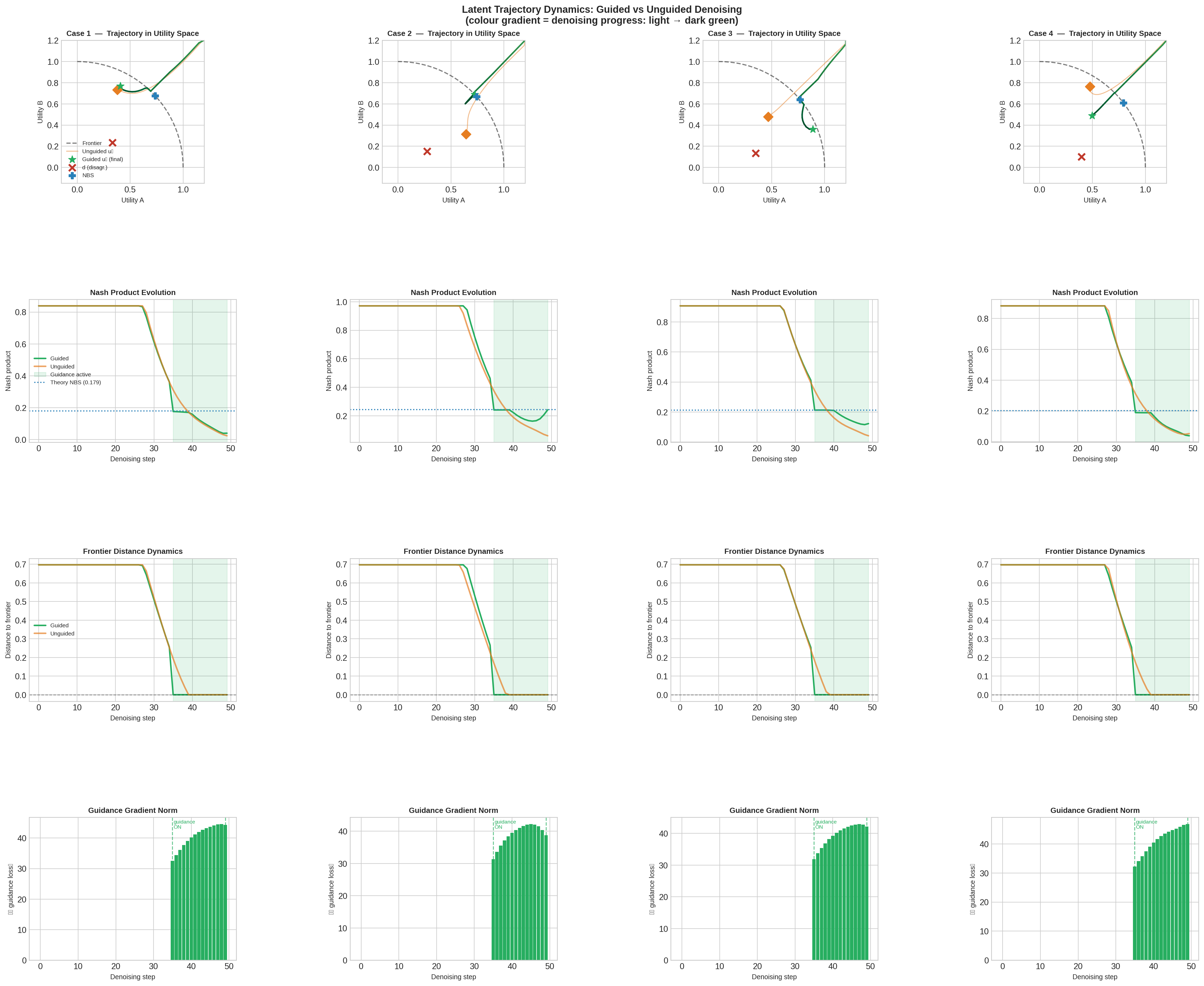}
    \caption{%
        \textbf{Latent trajectory dynamics — 4 representative cases
        (Synthetic NTU, $S=50$ visualization steps).}
        Each column is an independent negotiation scenario; guided (green)
        and unguided (orange) trajectories share identical noise
        initializations so that differences are attributable solely to
        $\nabla_{\hat{\mathbf{u}}_0} \mathcal{L}_{\text{guide}}$.
        \textbf{Row 1:} Utility-space path of $\hat{\mathbf{u}}_0$ over
        denoising steps; color intensity encodes progress (light
        $\to$ dark). Stars ($\star$) and diamonds ($\diamond$) mark guided
        and unguided terminal allocations; the dashed arc is the Pareto
        frontier; red crosses mark disagreement points $\mathbf{d}$.
        \textbf{Row 2:} Nash product $\prod_i(\hat{u}_{0,i} - d_i)$ per
        step; green band = guidance window; dotted line = theoretical NBS.
        \textbf{Row 3:} Distance of $\hat{\mathbf{u}}_0$ to the Pareto
        frontier — both modes converge to ${\approx}0$, confirming that
        the MLP denoiser achieves frontier proximity independently;
        guidance provides Nash-directional, not feasibility, correction.
        \textbf{Row 4:} Guidance gradient norm
        $\|\nabla_{\hat{\mathbf{u}}_0} \mathcal{L}_{\text{guide}}\|_2$,
        zero outside the guidance window and growing monotonically within
        it due to the adaptive weighting $w = \lambda\sqrt{\bar{\alpha}_t}$.%
    }
    \label{fig:dynamics_synthetic}
\end{figure}

Individual trajectory analysis (Figure~\ref{fig:dynamics_synthetic}) provides
a micro-level view of the geometric steering mechanism across four test cases.
Tracking $\hat{\mathbf{u}}_0$ in utility space (Row~1) reveals how guidance
corrects off-frontier drift: in Case~2, the unguided trajectory terminates
well inside the feasible set at a suboptimal Nash product, while the guided
trajectory steers onto the Pareto arc. In Case~4, where the unguided model
overshoots the feasible set, the projection $\Pi_{\mathcal{F}}$
restores feasibility.

The Nash product evolution (Row~2) highlights temporal dynamics: upon
entering the guidance window (green band), the guided trajectory consistently
terminates closer to the theoretical NBS (dotted line) than the unguided
baseline. Row~3 confirms that Frontier Distance converges to near zero for
both modes — establishing that the MLP denoiser $s_\theta$ independently
achieves Pareto proximity; the role of $\mathcal{L}_{\text{guide}}$ is
directional (Nash-optimal position on the frontier) rather than general
feasibility enforcement. The gradient norm (Row~4) grows monotonically from
step 35 to 50, consistent with the adaptive weighting $w = \lambda
\sqrt{\bar{\alpha}_t}$: as $\bar{\alpha}_t$ increases toward unity in the
low-noise regime, guidance strength increases precisely when the clean
estimate $\hat{\mathbf{u}}_0$ is most reliable.

\subsection{Comparison with Baseline Methods}
\label{subsec:sota}

We evaluate the guided graph diffusion framework against six baseline approaches, encompassing direct regression models, conventional generative architectures, and structural ablations of the proposed system. All baselines are trained and assessed on identical dataset partitions (80/10/10 train/validation/test, random seed 42), with utilities normalized to $[0,1]^2$, and performance metrics computed over 500 held-out test instances. Table~\ref{tab:sota_comparison} reports IR Compliance and Nash Efficiency for each method across the three negotiation domains.

\begin{table}[htbp]
\centering
\caption{%
  Comparison of baseline methods on IR Compliance (\%) and Nash Efficiency (\%).
  Higher is better for both metrics.
  Nash Efficiency is normalized per-instance by the SLSQP solution
  $\mathbf{u}^*_{\text{NBS}}$; a score of 100\% indicates exact recovery of NBS.
  $^\dagger$~SLSQP requires full frontier knowledge at inference time and is
  not deployable; its Nash Efficiency is 100\% by definition (serving as the
  normalization reference) and is shown for completeness only.
}
\label{tab:sota_comparison}
\renewcommand{\arraystretch}{1.3}
\resizebox{\textwidth}{!}{%
\begin{tabular}{l|cc|cc|cc}
\toprule
\multirow{2}{*}{\textbf{Method}}
  & \multicolumn{2}{c|}{\textbf{Synthetic NTU}}
  & \multicolumn{2}{c|}{\textbf{CaSiNo}}
  & \multicolumn{2}{c}{\textbf{Deal or No Deal}} \\
  & \textbf{IR (\%)} & \textbf{Nash Eff.\ (\%)}
  & \textbf{IR (\%)} & \textbf{Nash Eff.\ (\%)}
  & \textbf{IR (\%)} & \textbf{Nash Eff.\ (\%)} \\
\midrule
SLSQP Oracle$^\dagger$~\cite{2020scipy}
  & 100.0 & \emph{def.}
  & 100.0 & \emph{def.}
  & 100.0 & \emph{def.} \\
\midrule
Supervised MLP
  & 100.0 &  40.10
  & 100.0 &   0.03
  & 100.0 &  61.83 \\
CVAE \cite{sohn2015learning}
  &  98.4 &  33.88
  & 100.0 &   0.03
  & 100.0 &  62.46 \\
Conditional GAN \cite{mirza2014conditional}
  & 100.0 &  39.63
  & 100.0 &   0.00
  & 100.0 &  68.41 \\
\midrule
Unguided DDIM (ours, no guidance)
  &  93.9 &  24.29
  & 100.0 &   5.64
  & 100.0 &  67.47 \\
Projection DDIM (ours, post-hoc)
  & 100.0 &  50.55
  & 100.0 &   5.90
  & 100.0 &  65.72 \\
Hard-constraint DDIM (ours, $t_{\text{start}}=1$)
  &  99.8 &  59.29
  & 100.0 &  45.66
  & 100.0 &  88.12 \\
\midrule
\textbf{Guided DDIM (ours, $t_{\text{start}} < 1$)}
  & \textbf{100.0} & \textbf{99.45}
  & \textbf{100.0} & \textbf{54.24}
  & \textbf{100.0} & \textbf{88.67} \\
\bottomrule
\end{tabular}%
}
\end{table}

As established in Section~\ref{subsec:compliance}, the guided graph diffusion model achieves 100\% IR compliance and the highest Nash Efficiency across all domains. Table~\ref{tab:sota_comparison} contextualizes this performance against baseline methods. On the Synthetic NTU dataset, the guided model ($99.45\%$) outperforms the strongest deep generative baseline (Conditional GAN, $39.63\%$) by 59.8\% points. On the empirical corpora (CaSiNo and Deal or No Deal), the guided model maintains this performance advantage while preserving strict IR compliance, demonstrating an adherence to axiomatic boundaries not observed in the unguided generative baselines.

To understand the source of this performance gap, we first examine the behavior of established generative architectures operating without normative constraints. On the Synthetic NTU dataset, the Conditional GAN achieves $39.63\%$ Nash Efficiency and the CVAE $33.88\%$. However, on the CaSiNo corpus, both architectures yield near-zero efficiency (CVAE $0.03\%$, Conditional GAN $0.00\%$). This reduction reflects a structural limitation of unconstrained generative models: they simply replicate the suboptimal bargaining distributions present in empirical training data, producing allocations that are individually rational but Nash-inefficient. The guided model mitigates this reliance on empirical priors via its differentiable normative loss, achieving $54.24\%$ on the same corpus. On Deal or No Deal, the GAN achieves $68.41\%$ and the CVAE $62.46\%$, remaining $20.3$ and $26.2$\% points below the guided model ($88.67\%$). Furthermore, neither of the generative baselines ensures strict IR compliance across all tested domains.

Beyond probabilistic generation, direct regression approaches face similar empirical pitfalls. The Supervised MLP baseline constitutes a direct regression model that utilizes the same GATv2 encoder, conditioned on agent-specific features. It attains $40.10\%$ efficiency on Synthetic NTU and $61.83\%$ on Deal or No Deal, but drops to only $0.03\%$ on CaSiNo. Owing to its training via a mean squared error (MSE) loss with respect to human-generated utility targets, the MLP predominantly learns to reproduce average empirical human allocation behaviors, including systematic under-claiming. These findings suggest that the observed performance gap between supervised methods and the guided model on human datasets stems from the fundamental distinction between imitating realized human outcomes and explicitly optimizing with respect to a normative objective.

Having established the limitations of external baselines, three structural variants of the proposed method are examined to isolate the contributions of key internal design choices. First, the Unguided DDIM baseline ($24.29\%$, $5.64\%$, $67.47\%$) quantifies the effect of the denoiser in isolation: it generates samples that are close to the Pareto frontier but does not implement any Nash-directional adjustment. Second, the Projection DDIM baseline introduces guidance only as a single post-processing operation, in which the final unguided sample is projected onto the Pareto frontier using SLSQP. On CaSiNo, this configuration attains $5.90\%$ efficiency, which is comparable to the unguided baseline ($5.64\%$). This outcome indicates that iterative guidance is essential; a one-time post-hoc correction applied to an empirically biased sample is insufficient to achieve the optimal projection target. Third, the Hard-constraint DDIM variant applies guidance at every reverse diffusion step ($t_{\text{start}} = 1$). On Synthetic NTU, it attains $59.29\%$ efficiency, which is substantially lower than the full model’s $99.45\%$, and it induces a reduction in IR compliance to $99.8\%$. This degradation empirically supports the use of the $t_{\text{start}}$ activation window: an overly restrictive gradient field at high noise levels interferes with mathematical feasibility. Constraining guidance to only the final fraction of diffusion steps permits the denoiser to first attain Pareto proximity before the Nash-directional correction is enforced. On Deal or No Deal, the efficiency difference is markedly smaller ($88.12\%$ vs.\ $88.67\%$), suggesting that the dataset’s discrete item structure reduces its sensitivity to the specific guidance activation schedule.

Finally, the normative superiority of the guided framework must be weighed against its computational demands. Generative baseline models (CVAE, Conditional GAN, and Supervised MLP) require only a single forward pass through the neural network at inference time. In contrast, the guided DDIM procedure necessitates $S \in [10, 15]$ sequential denoising steps, leading to a commensurate increase in computational latency. Nonetheless, on empirical datasets such as CaSiNo, where single-step generative approaches exhibit near-zero normative efficiency, this additional computational burden constitutes a necessary trade-off of inference speed for enhanced normative robustness. In particular, attaining a Nash Efficiency of $54.24\%$ while simultaneously guaranteeing $100\%$ IR compliance was not achieved by any single-pass architecture evaluated within this framework.

\section{Discussion}
\label{sec:discussion}

The empirical results validate the core hypothesis that equipping deep generative models with differentiable normative guidance effectively overcomes historical biases in human bargaining data. By decoupling relational state encoding from generative search, the guided graph diffusion framework bridges the gap between the strict axiomatic guarantees of cooperative game theory and the representational capacity of modern deep learning. This architecture directly addresses the structural deficiencies that have historically limited autonomous negotiation systems.

The most consequential finding is the severe efficiency degradation of standard generative baselines on the human centric CaSiNo corpus. While these unconstrained models perform adequately on synthetic datasets with exact Pareto frontiers, they collapse to near zero efficiency when trained on actual human negotiations. This dichotomy highlights a structural limitation of purely data-driven agents. Human bargaining is boundedly rational \cite{simon1955behavioral}, characterized by risk aversion \cite{kahneman1979prospect}, anchoring \cite{tversky1974judgment}, and a tendency to accept suboptimal interior point outcomes. Consequently, training corpora densely record biased allocation behavior.

When standard architectures minimize divergence objectives against these datasets, they merely replicate this suboptimality. The supervised baseline exhibits the same failure mode through mean squared error regression. Our framework circumvents this trap by utilizing empirical data solely to map the feasible set while relying on the guidance loss to navigate toward the Pareto efficient. The massive efficiency gap between the guided model and unconstrained baselines confirms that differentiable steering is essential for recovering high surplus outcomes when human behavior cannot serve as a normative target.

Ablation studies illuminate the mechanical advantage of iterative guided diffusion over naive constraint enforcement. The poor performance of the post hoc projection baseline demonstrates that distributional collapse is structurally embedded in the generative sample. A single radial projection onto the Pareto frontier cannot recover optimal joint surplus because it lacks the Nash directional information required to locate the correct point on the frontier arc.

Guided diffusion addresses this through a two phase dynamic. Restricting guidance to the final low noise steps allows the unconditional score network to first establish a Pareto proximate distribution. Subsequently, normative gradients execute fine grained directional optimization. This directional correction requires only the gradient of the guidance loss computed from the disagreement vector, avoiding the need for explicit analytical knowledge of the Pareto frontier geometry.

Deploying autonomous agents in high-stakes domains like supply chain procurement \cite{carbonneau2008application} or legal dispute resolution \cite{lodder2005negotiation} requires strict mathematical guarantees. A system must ensure that no generated outcome is worse than the best alternative to a negotiated agreement for any agent \cite{fisher1981getting}. Our framework demonstrates exceptional reliability in maintaining individual rationality compliance across the evaluated domains, without requiring hard constraint projections that over constrain the trajectory. Instead, the softplus based penalty provides a continuous repulsive force away from the boundary. Furthermore, because the encoder explicitly models relational asymmetries, the framework naturally supports counterfactual fairness audits \cite{kusner2017counterfactual}, allowing practitioners to quantify how structural imbalances propagate to final allocations.

Despite its strong empirical performance, the framework has boundary conditions that inform future research. First, the sequential iterative denoising adds minor inference latency, precluding deployment in ultra-low latency environments like high-frequency automated bidding. Second, the formulation assumes fully observable disagreement vectors and priority weights. Extending the guidance mechanism to operate under uncertainty via probabilistic belief distributions represents a critical next step \cite{barry2000lying}.

Third, the optimal guidance step size hit the upper boundary of the search grid, suggesting that expanding the parameter sweep may yield even higher efficiencies. Fourth, while the graph encoder naturally scales to multilateral settings involving multiple parties, the geometric dynamics of guidance gradients in higher dimensional utility spaces require rigorous empirical validation. Finally, the frontier distance metric used in evaluation only penalizes outward overshoot beyond the unit ball, failing to discriminate between optimal frontier samples and those concentrated near the origin. Consequently, Nash Efficiency remains the sole reliable indicator of geometric progress toward the theoretical optimum.

\section{Conclusion and Future Work}
\label{sec:conclusion}

This paper introduced a guided graph diffusion framework for constructing equitable utility vectors in bilateral automated negotiation. The framework combines a graph attention encoder that captures asymmetric agent attributes in a relational representation, a conditional diffusion model that generates utility vectors in continuous space, and a composite normative guidance loss that enforces individual rationality and maximizes joint surplus throughout the reverse process. A design utilizing late activation withholds normative corrections until the final stage of generation. This allows the model to establish a structure proximate to the Pareto frontier before applying directional refinements toward the NBS.

Evaluation across three datasets confirms the effectiveness of this approach. On synthetic data with exactly computable optima, the framework achieves 99.45\% Nash Efficiency and 100\% individual rationality compliance. This performance falls within 0.55\% points of an oracle solver that requires full Pareto frontier knowledge. On empirical corpora derived from human negotiators, unconstrained generative baselines collapse to near zero efficiency. In contrast, the proposed framework achieves 54.24\% and 88.67\% Nash Efficiency on the CaSiNo and Deal or No Deal datasets respectively, while maintaining perfect individual rationality compliance across all evaluation domains. Ablation studies confirm that the activation timing of the guidance window is the most influential architectural parameter. Applying normative corrections from the start of generation degrades Nash Efficiency by up to 40\% points and introduces individual rationality violations.

Two limitations bound the scope of the current framework. First, the complete information assumption requires the disagreement point and priority weights of each agent to be known at inference time, which may not hold in adversarial settings. Second, the sequential denoising process introduces inference latency relative to single-pass baselines. This tradeoff is justified by the normative gains achieved on biased corpora but remains a practical consideration for real time deployment.

Future work will pursue two primary extensions. The framework will be extended from bilateral to multilateral negotiations. While the graph attention encoder generalizes to graphs of arbitrary size, the denoiser output dimension and the guidance loss projection are currently designed for two agents. Extending the architecture to multiple parties requires redesigning both components alongside managing the increased gradient variance associated with a higher dimensional Nash product. Furthermore, the guidance mechanism will be adapted for incomplete information settings by operating over a probabilistic belief state for the hidden attributes of each agent. This adaptation will enable deployment in environments where outside options and priorities are only partially observable.

\section*{Declaration of Competing Interest}
The author declares that they have no known competing financial interests or personal relationships that could have appeared to influence the work reported in this paper.

\section*{Declaration of Generative AI and AI Assisted Technologies in the Writing Process}
During the preparation of this work the author used generative artificial intelligence technologies solely to improve language clarity and readability. After using these tools, the author reviewed and edited the content as needed and takes full responsibility for the final content of the publication.

\section*{Data Availability}
The empirical datasets analyzed during the current study, specifically the CaSiNo human negotiation corpus and the Deal or No Deal corpus, are publicly available from their respective original creators. The synthetic utility dataset generated during the current study is available from the corresponding author on reasonable request.

\section*{Code Availability}
The computational code required to reproduce the guided graph diffusion framework, the baseline models, and the experimental results presented in this study is available from the corresponding author on reasonable request.

\bibliographystyle{elsarticle-num}
\bibliography{ref}
\end{document}